  \numberwithin{equation}{section} 
\theoremstyle{definition}
\newtheorem{definition}{Definition}[section]
\newtheorem{remark}[definition]{Remark}
\theoremstyle{plain}
\newtheorem{lemma}[definition]{Lemma}
\newtheorem{theorem}[definition]{Theorem}
\newtheorem{proposition}[definition]{Proposition}
\newcommand{\eqnum}{\leavevmode\hfill\refstepcounter{equation}\textup{\tagform@{\theequation}}}
\newcommand{\inserttitle}{Liquid Drop Model for Nuclear Matter in the Dilute Limit}
\newcommand{\insertauthor}{Lukas Emmert, Rupert L. Frank, Tobias König}
\newcommand{\insertdate}{November 13, 2019}
\title{\inserttitle}
\author{\insertauthor}
\date{\insertdate}
\newcommand{\tl}{{\vartheta, L}}
\let\olddd\dd
\renewcommand*\dd[1]{\olddd{#1}}
\newcommand{\E}{\mathcal{E}}\newcommand{\h}{\mathcal{H}}
\newcommand{\F}{\mathcal{F}}
\newcommand{\N}{\mathbb{N}}
\newcommand{\Z}{\mathbb{Z}}
\newcommand{\R}{\mathbb{R}}
\newcommand{\e}{\mathrm{e}}
\newcommand{\Per}{\operatorname{Per}}
\newcommand{\till}{{\tilde{l}}}
\newcommand{\tilo}{{\tilde{\Omega}^*}}
\newcommand{\bl}{\textrm{\textbf{\textit{l}}}}
\begin{document}

\fancypagestyle{plain}{%
  \fancyhf{}
}
\thispagestyle{plain}

\maketitle

\begin{abstract}
  We consider the liquid drop model for nuclei interacting with a neutralizing
  homogeneous background of electrons. The regime we are interested in is when the
  fraction between the electronic and the nuclear charge density is small. We show that in this dilute limit the thermodynamic ground state energy is given to leading order by that of an isolated nucleus.
\end{abstract}


\section{Introduction and Main Theorem}


\renewcommand{\thefootnote}{${}$} \footnotetext{\copyright\, 2019 by
  the authors. This paper may be reproduced, in its entirety, for
  non-commercial purposes.}

Gamow's liquid drop model \cite{Ga} is a simple model in nuclear physics which has recently attracted a lot of attention in mathematics, see, for instance, \cite{ChPe1, ChPe2, BoCr, Ju2, KnMu, LuOt, FrLi, KnMuNo,  ChMu, Ju1, FrLi2019}.

We begin with the description of a single nucleus in this model. Possible shapes of a nucleus are (measurable) sets $\Omega\subset\R^3$ and their measure $|\Omega|$ is interpreted as the number of nucleons in suitable units. The energy of such a nucleus is, again in suitable units,
\begin{equation}
  \label{eq_definition free energy functional}
  \E[\Omega] := \Per(\Omega) + \frac{1}{2} \int_{\Omega} \int_{\Omega} \frac{\dd x \dd y }{|x-y|}.
\end{equation}
This leads to the variational problem of finding, for a given $A > 0$,
\begin{equation}
  \label{eq_definition free energy}
  E(A) := \inf \{ \E [\Omega] \,:\, \Omega \subset \R^3, |\Omega| = A \}.
\end{equation}
It is known \cite{FrLi} (see also \cite{KnMuNo}) that there is an $A^* > 0$ such that $\frac{E(A^*)}{A^*}
= \inf_{A > 0} \frac{E(A)}{A}$ and that there is a minimizing set $\Omega^*
\subset \R^3$ with $|\Omega^*| = A^*$ such that $\E[\Omega^*] = E(A^*)$. This set $\Omega^*$ is strongly conjectured, but not known, to be a ball. Physically, it corresponds to a nucleus with the greatest binding energy per nucleon, which is a certain isotope of nickel.

In this paper, we are interested in a system consisting of a large number of
nuclei interacting with a uniform background of electrons, as arises, for
instance, in the crust of a neutron star. As usual in statistical mechanics, we
confine the system to a box $[-L/2, L/2]^3$ and are interested in the
thermodynamic limit $L\to\infty$. For finite $L>0$, allowed nuclear configurations are described by measurable sets $\Omega \subset [-L/2, L/2]^3$ and their energy is
\begin{equation}
  \label{eq_definition functional E theta L}
  \E_\tl[\Omega] := \Per(\Omega) + \frac{1}{2} \int_{[-L/2, L/2]^3} \int_{[-L/2, L/2]^3} (1_\Omega(x) - \vartheta) \frac{1}{|x-y|} (1_\Omega(y)- \vartheta) \dd x \dd y.
\end{equation}
The parameter $\vartheta \in (0,1]$ here describes the quotient between the electron and the nucleon charge density. We are interested in the ground state energy
\begin{equation}
  \label{eq_definition number E theta L}
  E_\tl := \inf \Bqty{ \E_\tl[\Omega] \, : \, \Omega \subset [-L/2, L/2]^3, |\Omega| = \vartheta L^3 }.
\end{equation}
Note that the constraint $|\Omega| = \vartheta L^3$ means that we only consider neutral configurations.

Our main result concerns the behavior of the energy per unit volume,
$E_\tl/L^3$, in the dilute limit $\vartheta\to 0$. The crucial point is to establish this uniformly in $L$.

\begin{theorem}[Ground State Energy Asymptotics] \label{th_gse asymptotics}
There is a constant $C > 0$ such that the following bounds hold.
  \begin{enumerate}
    \item[(i)]
  For all $\vartheta \in (0, \frac12]$ and $L >0$ such that
      $\vartheta^{1/3}L \ge C$, we have
      \begin{align} \label{upper_bound intro}
        \frac{E_\tl}{\vartheta L^3} \le   \frac{E(A^*)}{A^*}
        + C \vartheta^{1/3} + \frac{C}{\vartheta^{1/3} L} .
      \end{align}
    \item[(ii)]
     For all $\vartheta \in (0,1]$ and $L >0$, we have
      \begin{align} \label{lower_bound intro}
        \frac{E_\tl}{\vartheta L^3} \,\geq\, \frac{E(A^*)}{A^*}- C \vartheta^{1/5}.
      \end{align}
  \end{enumerate}
\end{theorem}


\begin{remark}
  Our result implies, in particular, that the thermodynamic limit
  \begin{align*} \label{thermodynamic_limit}
    e(\vartheta) := \lim_{L\to\infty} L^{-3} E_\tl,
  \end{align*}
  satisfies
  \begin{equation}
  \label{thermodynamic limit asymptotics}
  e(\vartheta) = \vartheta \left( \frac{E(A^*)}{A^*} + o(1) \right)
  \qquad\text{as}\ \vartheta\to 0 \,.
  \end{equation}
\end{remark}

\begin{remark}
  The bounds in Theorem \ref{th_gse asymptotics} give the asymptotics of
  the energy for $\vartheta$ close to $0$.
  By a simple symmetry argument our theorem yields analogous asymptotics
  for $\vartheta$ close to $1$.
  Namely, for $\vartheta \in [\frac12,1)$ and $(1-\vartheta)^{1/3}L \ge C$, we have
  \begin{align} \label{theta_large}
    - C (1-\vartheta)^{1/5} - \frac{6}{L}
     \le \frac{E_{\vartheta,L}}{(1-\vartheta) L^3} - \frac{E(A^*)}{A^*}
    &\le   C (1-\vartheta)^{1/3} + \frac{C}{(1-\vartheta)^{1/3} L}.
  \end{align}
  This follows from the fact that we have for $\Omega \subset Q_L$
  \begin{align*}
    \E_{\vartheta, L} [\Omega] = \E_{1-\vartheta,L}[Q_L\setminus\Omega]
    - \big(\mathcal H^2(\partial Q_L) - 2 \mathcal H^2 (\partial Q_L \cap \overline
    \Omega) \big),
  \end{align*}
  where the closure of $\Omega$ is taken in the measure theoretic sense.
  The term in parentheses is bounded in absolute value by $6L^2$.
  Therefore, \eqref{theta_large} follows from the bounds in Theorem \ref{th_gse
  asymptotics}.
\end{remark}

\begin{remark}
The power $1/5$ of $\vartheta$ in \eqref{lower_bound intro} is technical. It is an interesting question to decide whether the power $1/3$ in \eqref{upper_bound intro} is best possible. The assumption $\vartheta^{1/3} L \geq C$ and the corresponding remainder term in \eqref{upper_bound intro} are not severe restrictions in the thermodynamic limit and are imposed mainly for a simple statement.
\end{remark}

\begin{remark}
\label{remark reference to appendix}
  In the literature the Coulomb kernel $|x-y|^{-1}$
  in the energy \eqref{eq_definition free energy} is often replaced by the Green's
  function of the Laplacian on $Q_L$ with some boundary condition.
  Apart from the 'whole space' condition that we use, common choices are the
  periodic, the Dirichlet and the Neumann boundary condition.

In the Neumann case, the existence of the thermodynamic limit, together with the
sharp convergence rate of order $L^{-1}$, was shown in a remarkable paper by
Alberti, Choksi and Otto \cite{AlChOt}. In Appendix
\ref{se_thermodynamic_limit}, we use comparison arguments to extend their result
to all the boundary conditions mentioned above. In particular, this shows that the left side in \eqref{thermodynamic limit asymptotics} is independent of the choice of boundary values.
\end{remark}

Theorem \ref{th_gse asymptotics} significantly improves the main result of
Knüpfer, Muratov and Novaga \cite{KnMuNo}, who show a similar asymptotic
equality in the ultra-dilute limit $\vartheta\sim L^{-2}$, where the background
density vanishes in the limit $L\to\infty$. In contrast, we can perform first
the limit $L\to\infty$ and then $\vartheta\to 0$. In the regime $\vartheta\sim
L^{-2}$ screening does not yet play a role and controlling this phenomenon is,
in fact, one of the accomplishments in this paper. 

The even more dilute situation where $\vartheta\sim L^{-3}$ was considered by
Choksi and Peletier \cite{ChPe1}. Then, the leading order $E(A^*)/A^*$ in \eqref{upper_bound intro} and \eqref{lower_bound intro} should be replaced by $E(\vartheta L^3)/(\vartheta L^3)$. In this situation our upper bound \eqref{upper_bound intro} is not applicable (at least not if $\vartheta L^3$ is too small) and our lower bound \eqref{lower_bound intro} is not tight. However, a simple variation of our arguments would also cover this regime. On the other hand, \cite{ChPe1} also establishes a lower order correction.

It is conjectured that for small $\vartheta$ minimizers are given, at least in the bulk, by a periodic arrangement of nearly spherical sets. Our main result provides evidence for the latter prediction. If this conjecture is true, then, as the nucleon density $\vartheta$ tends to zero, the balls on the lattice should move infinitely far apart. Each one of the balls should therefore be asymptotically equal to an energy-per-volume minimizer of the full-space energy functional \eqref{eq_definition free energy functional}, and hence the energy per unit volume should be given to leading order by $\inf_{0 < |\Omega| < \infty} |\Omega|^{-1}\E[\Omega] = (A^*)^{-1} E(A^*)$. This intuition guides us in the proof of the upper bound \eqref{upper_bound intro}.

We emphasize that our result is valid independently of whether $\Omega^*$ is a ball or not. This is particularly relevant for the proof of the upper bound \eqref{upper_bound intro}. When $\Omega^*$ is a ball, or more generally, when the quadrupole moment of $\Omega^*$ vanishes, the proof of the upper bound \eqref{upper_bound intro} is straightforward following the above intuition. When the quadrupole moment of $\Omega^*$ does not vanish, we need to distort the lattice to achieve the required cancellation in the long range behavior of the Coulomb potential.

The problem of proving periodicity of minimizers in this and other,
multi-dimensional minimization problems is a well-known and long standing open
problem (crystallization conjecture). The strongest result about local order for
the present problem was shown in the work \cite{AlChOt} mentioned before.
Remarkably, for the present problem it was proposed in the physics literature
\cite{RaPeWi,HaSeYa} that there are phase transitions at
$0<\vartheta_{c1}<\vartheta_{c2}<1/2<\vartheta_{c3}=1-\vartheta_{c2}<\vartheta_{c4}=1-\vartheta_{c1}<1$,
where the dimensionality of the periodicity changes. For
$0<\vartheta<\vartheta_{c1}$, minimizers are expected to be sphere shaped and
arranged in a three dimensional lattice, for
$\vartheta_{c1}<\vartheta<\vartheta_{c2}$, minimizers are expected to be
cylinder shaped and arranged in a two-periodic lattice and for
$\vartheta_{c2}<\vartheta<\vartheta_{c3}$ minimizers are expected to be slab
shaped with respect to a one-dimensional lattice. For $\vartheta>1/2$ the
situation reverses (since $\vartheta\mapsto1-\vartheta$ corresponds to
$\Omega\mapsto\R^3\setminus\Omega$) and one expects a transition to cylindrical
holes and then to spherical holes. This phenomenon is sometimes referred to as
`nuclear pasta phases'. Numerically, one has $\vartheta_{c1}\approx 0.20$ and
$\vartheta_{c2}\approx 0.35$ \cite{OyKaHa84}. We refer to a recent result
\cite{GiSe} where the optimality of slab-like structures was rigorously
established in a multi-dimensional lattice model which, similarly to the present
model, contains an attractive short range term competing with a repulsive
long-range term, see also \cite{DaRu}.

The remainder of this paper consists of two sections and two appendices. The
first section deals with the upper bound \eqref{upper_bound intro} in case the
minimizer $\Omega^*$ is a ball and the second one with the lower bound
\eqref{lower_bound intro}.  In Appendix \ref{appendix_removing_symmetry} we describe
the necessary changes in the proof of the upper bound in the situation where the
minimizer $\Omega^*$ of $\E[\Omega]/|\Omega|$ as defined in \eqref{eq_definition
free energy functional} is not a ball.  Using \cite{AlChOt}, we furthermore show
in Appendix \ref{se_thermodynamic_limit} that the thermodynamic limit of the
ground state energy is independent of the choice of boundary conditions.


\paragraph{Notation.} Since cubes with different sizes and centers will be a recurring tool in our analysis, it is convenient to introduce the following notation. For $r=(r_1, r_2, r_3) \in \R^3$, we set
\[ |r|_\infty = \max \{ |r_i| \, : \, i = 1,2,3 \}. \]
Then, for $r \in \R^3$ and $l > 0$, we define
\begin{equation}
  \label{eq_definition cube}
  Q_l(r) := \{ x \in \R^3 \, : \, |x - l r|_\infty < l/2 \},
\end{equation}
and $Q_l := Q_l(0)$.
Pay attention to the fact that by definition, $Q_l(r)$ is the cube of side length
$l$ centered at the point $l r$, not at $r$! In other words, to obtain $Q_l(r)$, one first takes a cube of unit side length centered at $r$ and then dilates it by the factor $l$. We found this slightly unusual definition better suited for our purposes.


\section{Upper Bound of the Ground State Energy} \label{section upper bound}


The purpose of this section is to prove the first statement of Theorem \ref{th_gse asymptotics}, which we restate here for convenience.

\begin{proposition}[Upper Bound] \label{pr_upper bound}
      There is a constant $C > 0$ such that, if
      $\vartheta^{1/3}L \ge C$ and $\vartheta \le \frac12$, we have
  \begin{align} \label{upper_bound}
    \frac{E_\tl}{\vartheta L^3} \le   \frac{E(A^*)}{A^*}
    + C \vartheta^{1/3} + \frac{C}{\vartheta^{1/3} L} .
  \end{align}
\end{proposition}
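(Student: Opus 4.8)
The plan is to construct an explicit trial configuration $\Omega \subset [-L/2,L/2]^3$ with $|\Omega| = \vartheta L^3$ that is, in the bulk, a periodic array of translates of the optimal isolated nucleus $\Omega^*$, and to estimate $\E_\tl[\Omega]$ from above. Let me sketch the structure I have in mind.

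**Step 1: Choice of the lattice spacing.** Each copy of $\Omega^*$ has volume $A^*$. If we want density $\vartheta$, we should place roughly one copy of $\Omega^*$ per cube of side length $\ell$ with $\ell^3 = A^*/\vartheta$, i.e. $\ell \sim \vartheta^{-1/3}$. So let $N := \lfloor L/\ell \rfloor$ and tile $[-L/2,L/2]^3$ by $N^3$ cubes $Q_\ell(r)$, $r \in \{-(N-1)/2, \dots, (N-1)/2\}^3$ (with a harmless boundary layer of width $O(\ell)$ left essentially empty, or filled with a small correcting set — this contributes the $C/(\vartheta^{1/3}L)$ term, since the boundary layer has relative volume $O(\ell/L) = O(\vartheta^{-1/3}/L)$). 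Inside each interior cube, place a translate $\Omega^* + \ell r$ of $\Omega^*$, centered so that it lies well inside $Q_\ell(r)$ (possible once $\vartheta$ is small, since $\mathrm{diam}(\Omega^*)$ is fixed while $\ell \to \infty$). Finally, rescale slightly (or adjust one component) so the total volume is exactly $\vartheta L^3$; the rescaling factor is $1 + O(\ell/L)$ and perturbs the energy acceptably.

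**Step 2: Energy bookkeeping.** The perimeter is additive: $\Per(\Omega) = N^3 \Per(\Omega^*) + (\text{boundary correction})$. The Coulomb energy splits as a sum of self-energies $\frac12 \int_{\Omega^*}\int_{\Omega^*} |x-y|^{-1}$ — one per copy, total $N^3 \cdot \tfrac12 D(\Omega^*)$ — plus interaction terms between the charge fluctuation $1_{\Omega^* + \ell r} - \vartheta 1_{Q_\ell(r)}$ in distinct cells. Since each cell is \emph{neutral} (we arranged $|\Omega^*| = A^* = \vartheta \ell^3$), the cell charge fluctuation $\mu_r := 1_{\Omega^*+\ell r} - \vartheta 1_{Q_\ell(r)}$ has zero total mass, so its potential decays like that of a dipole, $|x|^{-2}$, at large distance — and actually faster if the dipole moment vanishes, which it does here by the symmetry of placing $\Omega^*$ at the cell center together with the symmetry of the cube. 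Thus generically the leading far-field behavior is that of the \emph{quadrupole} moment. Summing $|\mathrm{cell}|^2$-weighted interactions over the lattice at pairwise distances $\gtrsim \ell |r-r'|$, with a quadrupole kernel $\sim (\ell|r-r'|)^{-3}$ per cell and $A^*$-sized sources, one finds the total interaction energy is $O(N^3 \cdot \ell^{-3} \cdot (\text{lattice sum }\sum_{k\neq 0}|k|^{-3}))$. The lattice sum $\sum_{0 < |k|_\infty \le N} |k|^{-3}$ diverges only logarithmically, so this is $O(N^3 \ell^{-3} \log N) = O(\vartheta L^3 \cdot \vartheta \log N)$, which is $o(\vartheta L^3 \cdot \vartheta^{1/3})$ and hence absorbed. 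Dividing by $\vartheta L^3 = N^3 A^*$ (up to boundary corrections) gives $\E_\tl[\Omega]/(\vartheta L^3) \le (\Per(\Omega^*) + \tfrac12 D(\Omega^*))/A^* + (\text{errors}) = E(A^*)/A^* + C\vartheta^{1/3} + C/(\vartheta^{1/3}L)$. The $\vartheta^{1/3}$ term comes from the interaction between a cell and the uniform background tail outside all the cells and from the slight volume rescaling; one has to track where the genuine $\vartheta^{1/3}$ (as opposed to a better power) enters — it is essentially the self-interaction of a single cell's charge fluctuation with the compensating background over the remaining volume.

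**Main obstacle.** The delicate point is the \emph{long-range} Coulomb interaction between cells: naively each pair contributes $\sim A^*/(\ell|r-r'|)$ (monopole–monopole), and $\sum_{r'} (\ell|r-r'|)^{-1}$ diverges like $N^2$, which would destroy the bound. Neutrality of each cell kills the monopole; the centering/cube symmetry kills the dipole. If $\Omega^*$ is a ball (or more generally has vanishing quadrupole moment, i.e. $\int_{\Omega^*} (3x_ix_j - |x|^2\delta_{ij})\,dx = 0$), then even the quadrupole term vanishes and the lattice sum converges absolutely, giving a clean $O(\vartheta L^3 \cdot \vartheta)$ error. When the quadrupole moment of $\Omega^*$ does \emph{not} vanish — which is possible since $\Omega^*$ is not known to be a ball — the above estimate only gives the logarithmically-divergent bound, which is still fine here but is unsatisfying; the fix, carried out in the appendix as the excerpt announces, is to \emph{distort the lattice} (e.g. apply a small shear, or rotate successive copies of $\Omega^*$) so that the quadrupole contributions of neighboring cells cancel in the far field. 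Implementing this distortion while keeping the volume constraint and the perimeter cost under control is the technical heart of the upper bound. I expect that arranging the trial state and doing the multipole/lattice-sum estimate are otherwise routine; the one genuinely new idea is this cancellation mechanism for the quadrupole.
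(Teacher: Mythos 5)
Your overall strategy — tiling $Q_L$ with copies of $\Omega^*$ at spacing $\ell\sim\vartheta^{-1/3}$, handling the boundary with a small rescaling of order $\ell/L$, and estimating the inter-cell Coulomb energy by a multipole expansion and a lattice sum — is exactly the strategy of the paper. However, there is a genuine gap in Step 2, precisely at the point you label the ``main obstacle.''

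You estimate the far-field energy between two cells by multiplying the \emph{quadrupole potential} $\sim(\ell|r-r'|)^{-3}$ of one cell by the $L^1$-norm $\sim A^*$ of the other cell's charge fluctuation, obtaining a lattice sum $\sum_{0<|k|\le N}|k|^{-3}\sim\log N$, and you then assert that the resulting error $\vartheta\log N$ is ``absorbed'' by the $\vartheta^{1/3}$ term. This is false: with $N\sim\vartheta^{1/3}L$, the quantity $\vartheta\log N$ tends to infinity as $L\to\infty$ at fixed $\vartheta$, so your bound is \emph{not} uniform in $L$, whereas uniformity in $L$ is the entire point of the theorem. The logarithmic divergence is therefore fatal, not cosmetic, and the passage ``which is still fine here but is unsatisfying'' is incorrect. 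To get a convergent lattice sum $\sum|k|^{-p}$ with $p>3$, one cannot afford the quadrupole term, and this is why the paper needs the monopole, dipole \emph{and} quadrupole moments of the cell charge fluctuation $1_{\tilde\Omega^*}-\vartheta 1_{Q_{\tilde\ell}}$ to vanish. (The paper Taylor-expands the kernel only in the variable $y$ to second order, cancels the first three terms using these three vanishing moments, and estimates the $\mathcal{O}(|y|^3/|a|^4)$ remainder crudely; this produces $\sum|r-s|^{-4}$, which converges, and the dominant factor $\int|1_{\tilde\Omega^*}(y)-\vartheta 1_{Q_{\tilde\ell}}(y)|\,|y|^3\,\dd y\sim A^*\tilde\ell^3$ is what makes the final error $\sim\vartheta^{1/3}$, not something smaller.) Consequently, the lattice distortion of the appendix (via Lemma \ref{le_vanishing multipole moments}, which finds an orthogonal map $U$, a translation $y$, and an anisotropic dilation $\bm\lambda$ with $\lambda_1\lambda_2\lambda_3=1$ killing the quadrupole, followed by Lemma \ref{le_approximation by cuboid} to compare $E_\tl$ with $E_{\tl,\bm\lambda}$) is an essential ingredient of the upper bound for general $\Omega^*$, not merely an aesthetic improvement. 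Your one-line $L^1$-estimate in Step 2 would need to be replaced either by the paper's expansion (using the vanishing quadrupole) or by a genuinely bilinear multipole expansion in $x-y$ that exploits the vanishing monopole and dipole of \emph{both} cells, which produces a convergent sum without the quadrupole hypothesis; but as written, the step does not close.

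A few smaller remarks. Your bookkeeping of the $\vartheta^{1/3}$ error is only partly right: in the paper it arises on equal footing from the cube--cube self-interaction ($\vartheta^2\ell^5$), the near-field interaction, and the far-field remainder, not predominantly from ``a single cell's interaction with the outside background.'' Your treatment of the boundary layer (``left essentially empty, or filled with a small correcting set'') is vaguer than the paper's clean device of rescaling the full configuration by a factor $\lambda_\tl$ with $\lambda_\tl^3=\vartheta L^3/(A^*N_\tl)\le 1+C\ell/L$, which at once restores the exact mass constraint and the local neutrality \eqref{eq_local neutrality of Omega theta L}; if the boundary layer is merely left empty, neutrality fails. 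Finally, if the quadrupole really vanishes, the paper's far-field error is $\sim\vartheta^{1/3}$, not $\sim\vartheta$ as you state in the ``Main obstacle'' paragraph.
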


\begin{proof}
  For simplicity, we assume in the following the minimizer $\Omega^*$ of the whole space problem
  to be a ball $B(0, r_*)$ of the appropriate radius $r_*$ centered at 0. In the appendix it is explained how the proof has to be modified
  if this is not so.

  To prove Proposition \ref{pr_upper bound}, we construct, for every pair
  $(\tl)$, a suitable competitor set $\Omega_\tl$ for $E_\tl$. The idea is to take
  $\Omega_\tl$ to be given by a cubic lattice arrangement on $Q_L$ of sets
  $\Omega^*$. The period length $l >0$ of the lattice will be chosen so that the
  requirement $|\Omega_\tl| = \vartheta L^3$ is fulfilled.
  Since we want each box of side length
  $l$ to contain one copy of $\Omega^*$, for the mass density to be equal to
  $\vartheta$, we need to require $\vartheta l^3 = A^*$, or
  \[ l = {A^*}^{1/3} \vartheta^{-1/3}. \]

  Let ${\mathcal C_\tl} := \{ r \in \Z^3 \, : \, Q_l(r) \subset  Q_L \}$ be the set
  of lattice points $r$ such that the cubes $Q_l(r)$ are fully contained in $Q_L$.
  Let $N_\tl := \# {\mathcal C_\tl}$ denote the number of these cubes.

  We now define the set $\Omega_\tl$ to be the following disjoint union

  \begin{equation} \label{eq_definition Omega tl trial state}
    \Omega_\tl := \bigcup_{r \in {\mathcal C_\tl}} \pqty{ \lambda_\tl l r + \lambda_\tl \Omega^* },
  \end{equation}
  where the rescaling factor $\lambda_\tl$ is given by
  \begin{equation}
    \label{eq_definition rescaling factors}
    \lambda_\tl^3 = \frac{\vartheta L^3}{A^* N_\tl}.
  \end{equation}

  Note that the union in \eqref{eq_definition Omega tl trial state} is disjoint
  since $\vartheta \le \frac12$.
  (Indeed, we have $\frac12 l^3 \ge \vartheta l^3 = A^* = \frac{4\pi}{3} r_*^3$ and
  hence, $l > 2r_*$.)

  Informally, our construction of the competitor set $\Omega_\tl$ can thus be
  described as follows. We fill $Q_L$ with small boxes $Q_l(r)$ of side length
  $l$ as full as possible, place a copy of $\Omega^*$ in the middle of each box and enlarge the whole configuration slightly by the factor $\lambda_\tl$.

  The definition of $\lambda_\tl$ now ensures that the boxes $Q_{\lambda_\tl l}(r)$ cover $Q_L$ completely and that the mass constraint
  \begin{equation} \label{eq_mass constraint with lambda}
    |\Omega_\tl| = N_\tl A^* \lambda_\tl^3 = \vartheta L^3
  \end{equation}
  is fulfilled. Note also that with this choice, we even have \emph{local neutrality} of $\Omega_\tl$ on every box $Q_{\lambda_\tl l}(r)$, i.e. for every $r \in \mathcal C_\tl$,
  \begin{equation}
    \label{eq_local neutrality of Omega theta L}
    |\Omega_\tl \cap Q_{\lambda_\tl l}(r)| = \lambda_\tl ^3 A^*  = \vartheta \lambda_\tl^3 l^3 = \vartheta |Q_{\lambda_\tl l}(r)|.
  \end{equation}

  Since the number of boundary boxes is of order $\frac{L^2}{l^2}$, it is easy to see that $N_\tl$ satisfies the bounds
  \begin{equation} \label{eq_bounds on number of boxes}
    \frac{L^3}{l^3} \geq N_\tl \geq \frac{L^3}{l^3} - C \frac{L^2}{l^2},
  \end{equation}
  for some $C >0$ independent of $\vartheta$ and $L$.
  From \eqref{eq_definition rescaling factors}, we thus obtain the bound
  \begin{equation}
    \label{eq_bounds on lambda tl} 1 \leq \lambda_\tl^3 \leq \frac{\vartheta L^3
    }{A^*\qty(\frac{L^3}{l^3} - C \frac{L^2}{l^2})} = \frac{\vartheta}{A^*(l^{-3} - C L^{-1} l^{-2})} = \frac{1}{1 - C \frac{l}{L}} \leq 1 + C \frac{l}{L},
  \end{equation}
  and so, in particular, $\lim_{L \to \infty} \lambda_\tl = 1$.   \\
  In many situations below, to estimate subleading terms, the crude bound
  \begin{equation}
    \label{eq_bound on lambda tl crude}
    1 \le \lambda_\tl \leq 2,
  \end{equation}
  is enough. It follows from \eqref{eq_bounds on lambda tl} whenever $\frac{l}{L} = \frac{{A^*}^{1/3}}{\vartheta^{1/3} L} \leq C^{-1}$, for some universal $C>0$. \\

  Our proof of the bound \eqref{upper_bound} consists in computing in three separate steps the self-energy, the near-field and the far-field interaction energy of the set $\Omega_\tl$. That is, we split
  \begin{equation}
    \label{eq_splitting into self, near, far}
    \E_\tl [\Omega_\tl] = \E^\text{(self)}_\tl + \E^\text{(near)}_\tl + \E^\text{(far)}_\tl,
  \end{equation}
  by partitioning the double integral from the interaction term of $\E_\tl$. To simplify notation, we will write
  \begin{align}
    \till = \lambda_\tl l
  \end{align}
  in the rest of this proof. Hence, we define
  \begin{equation}
    \label{definition E self}
    \E^\text{(self)}_\tl := \Per(\Omega_\tl) + \sum_{r \in {\mathcal C_\tl}} \frac12 \int_{Q_\till(r)} \int_{Q_\till(r)} (1_{\Omega_\tl}(x) - \vartheta) \frac{1}{|x-y|}
    \pqty{1_{\Omega_\tl}(y) - \vartheta} \dd{x} \dd{y},
  \end{equation}
  and
  \begin{equation}
    \label{definition E near}
    \E^\text{(near)}_\tl := \sum_{(r,s) \in V_{\textnormal{near}}}
    \frac12 \int_{Q_\till(r)} \int_{Q_\till(s)} \pqty{1_{\Omega_\tl}(x) - \vartheta} \frac{1}{|x-y|}
    \pqty{1_{\Omega_\tl}(y) - \vartheta} \dd{x} \dd{y},
  \end{equation}
  where
  \begin{equation}
    \label{definition V near}
    V_{\textnormal{near}}:= V^{\textnormal{(near)}}_\tl := \big\{(r,s) \in
      {\mathcal C_\tl} \times {\mathcal C_\tl}\textnormal{ and } 1 \le |r-s|_\infty
    \le M \big\}.
  \end{equation}
  Here, $M \in \N$ is a number fixed throughout our proof (let us say  $M=10$).

  Lastly, we define
  \begin{equation}
    \label{definition E far}
    \E^\text{(far)}_\tl := \sum_{(r,s) \in V_{\textnormal{far}}} \frac{1}{2}
    \int_{Q_\till(r)} \int_{Q_\till(s)} \pqty{1_{\Omega_\tl}(x) - \vartheta} \frac{1}{|x-y|}
    \pqty{1_{\Omega_\tl}(y) - \vartheta} \dd{x} \dd{y},
  \end{equation}
  where
  \begin{equation}
    \label{definition V far}
    V_{\textnormal{far}} := V^{\textnormal{(far)}}_\tl := \big\{(r,s) \in {\mathcal C_\tl} \times {\mathcal C_\tl} \textnormal{ and } |r-s|_\infty > M \big\}.
  \end{equation}

  \textit{Step 1: Self-Energy.}
  Similarly to $\till$, we write
  \begin{align}
    \tilo = \lambda_\tl \Omega^*
  \end{align}
  here and in the rest of this proof. Since $\Omega_\tl$ consists of $N_\tl$ disjoint copies of $\tilo$, we have
  \begin{align*}
    \frac{\E^\text{(self)}_\tl }{\vartheta L^3}  & \leq \frac{N_\tl }{\vartheta
    L^3} \qty( \Per(\tilo) + \frac{1}{2} \int_{\tilo} \int_{\tilo} \frac{\dd x \dd y}{|x-y|}  +  \frac{\vartheta^2}{2} \int_{Q_\till} \int_{Q_\till} \frac{\dd x \dd y}{|x-y|} ).
  \end{align*}
  Since $\frac{N_\tl }{\vartheta L^3} = \frac{1}{A^* \lambda_\tl^3}$ by \eqref{eq_mass
  constraint with lambda}, we have
  \begin{align*}
    \frac{\E^\text{(self)}_\tl }{\vartheta L^3}  &\leq \frac{1}{A^*} \Big( \lambda_\tl^{-1} \Per(\Omega^*) + \frac{\lambda_\tl^2 }{2} \int_{\Omega^*} \int_{\Omega^*}  \frac{\dd x \dd y}{|x-y|} \dd{x} \dd{y} +  C \vartheta^2 \till^5 \Big)\\
    & \leq \frac{E(A^*)}{A^*} +  \frac{E(A^*)}{A^*} (\lambda_\tl^2 - 1) +  C \vartheta^2 l^5 \\
    & \leq \frac{E(A^*)}{A^*} + \frac{C}{\vartheta^{1/3} L} + C \vartheta^{1/3},
  \end{align*}
  where we used the bound $\lambda_\tl^2 - 1 \leq C(\lambda_\tl^3 - 1) \leq C
  \frac{l}{L} = C \frac{1}{\vartheta^{1/3} L}$ from \eqref{eq_bounds on lambda
  tl} for the last inequality. Moreover, recall $l \le \tilde l = \lambda_\tl l \le 2l$, from \eqref{eq_bound on lambda tl crude}.\\


  \textit{Step 2: Near Field Interaction.}
  Due to the periodicity of $\Omega_\tl$, we have
  \begin{align} \nonumber
    \frac{ \E_\tl^\text{(near)}}{\vartheta L^3} &=  \frac1{\vartheta L^3} \sum_{(r,s) \in V_{\textnormal{near}}}
    \frac12   \int_{Q_\till(r)} \int_{Q_\till(s)} \pqty{1_{\Omega_\tl}(x) - \vartheta} \frac{1}{|x-y|}
    \pqty{1_{\Omega_\tl}(y) - \vartheta} \dd{x} \dd{y}
    \\ \nonumber
    &= \frac1{\vartheta L^3} \sum_{(r,s) \in V_{\textnormal{near}}}
    \frac12 \int_{Q_\till} \int_{Q_\till} \pqty{1_{\tilo}(x) - \vartheta} \frac{1}{|r\till + x- s\till- y|}
    \pqty{1_{\tilo}(y) - \vartheta} \dd{x} \dd{y} \\
    \label{eq_upper_bound_near_field1}
    &\le \frac1{\vartheta L^3} \sum_{(r,s) \in V_{\textnormal{near}}}
    \frac12 \pqty{  \int_{\tilo} \int_{\tilo} \frac{1}{|r\till + x- s\till - y|} \dd{x} \dd{y}
    + \vartheta^2 \int_{Q_\till} \dd{x} \int_{Q_\till} \frac{1}{|y|} \dd{y} },
  \end{align}
  where we used the fact that the integral over the symmetric-decreasing function $1/|\cdot|$ is
  largest on the cube centered at $0$. This follows from the observation that three Steiner symmetrizations with respect to the coordinate directions $e_1$, $e_2$, $e_3$ transform any cube $Q_l(\mu)$ into the centered cube $Q_l(0)$.
  Furthermore, for every $r \neq s$ and $x,y \in \tilde \Omega^*$ we have
  \[ |r\till + x- s\till - y| \geq |r-s|\till - |x-y| \geq \lambda_\tl \pqty{l -
  \text{diam}(\Omega^*) } \geq \till /C. \]
  Here, we used $\frac12 l^3 \ge \vartheta l^3 = A^* = \frac{4\pi}{3} r_*^3$,
  which implies $l - 2r_* \ge l/C$.
  Hence, the right hand side of
  \eqref{eq_upper_bound_near_field1} is bounded from above by
  \begin{align} \label{eq_upper_bound_near_field2}
    \frac1{\vartheta L^3} &\sum_{r \in \mathcal C_\tl} M^3 \,
    \pqty{ \frac{C}\till + C\vartheta^2 \till^{\;\!5}}
    \le C \frac{L^3 \till^{-3}}{\vartheta L^3} \pqty{ \till^{-1} + \vartheta^2 \till^{\;\!5}}
    \leq C \vartheta^{1/3},
  \end{align}
  where we used the bound \eqref{eq_bounds on number of boxes}.
  For the last inequality, recall the choice $\vartheta l^3 = |\Omega^*| = A^*$
  and the bound $1 \le \lambda_\tl \leq 2$ from \eqref{eq_bound on lambda tl crude}. \\

  \textit{Step 3: Far Field Interaction.}
  Due to the periodicity of $\Omega_\tl$, we have
  \begin{align} \nonumber
    \E_\tl^{\text{(far)}} = &\sum_{(r,s) \in V_{\textnormal{far}}}
    \frac12 \int_{Q_\till(r)} \int_{Q_\till(s)} \pqty{1_{\Omega_\tl}(x) - \vartheta} \frac{1}{|x-y|}
    \pqty{1_{\Omega_\tl}(y) - \vartheta} \dd{x} \dd{y}
    \\ \label{eq_upper_bound_far_field1}
    =& \sum_{(r,s) \in V_{\textnormal{far}}}
    \frac12 \int_{Q_\till} \int_{Q_\till} \pqty{1_{\tilo}(x) - \vartheta} \frac{1}{|r\till + x- s\till- y|}
    \pqty{1_{\tilo}(y) - \vartheta} \dd{x} \dd{y}.
  \end{align}

  We now use the Taylor expansion
  \begin{align} \label{eq_taylor expansion multipole}
    \frac1{|a-b|} = \frac1{|a|} + \frac{a\cdot b}{|a|^3} + \frac12
    \frac{3(a\cdot b)^2 - a^2b^2}{|a|^5} + \mathcal O\pqty{ \frac{|b|^3}{|a|^4} },
  \end{align}
  valid for $a,b \in \R^3$ with $|a| \ge 4|b|$,
  and choose $a = (r-s)\till + x $ and $b = y  $.

  By our assumption that $\Omega^* = B(0,r_*)$, the monopole, the dipole and the quadrupole moments of
  $1_{\Omega^*} - \vartheta 1_{Q_\till}$ vanish.

  That is, for all $a \in \R^3 \backslash \Bqty{0}$, we have the equation
  \begin{align} \label{eq_multipole momenta vanish}
    \int_{\R^3} \pqty{ 1_{\tilo}(y) - \vartheta 1_{Q_\till}(y) }
    \pqty{ \frac1{|a|} + \frac{a\cdot y}{|a|^3} + \frac12
    \frac{3(a\cdot y)^2 - a^2y^2}{|a|^5} } \dd y = 0.
  \end{align}
  This follows from our neutrality condition
  \eqref{eq_local neutrality of Omega theta L} and the symmetries of a ball and a
  cube centered at $0$.
  More precisely, the function $1_{\tilo}(y) - \vartheta 1_{Q_\till}(y) $ is
  invariant under the reflection of one coordinate $y_i \mapsto -y_i$ as well as
  under the exchange of two coordinates $y_i$ and $y_j$.

  These symmetries cause the dipole, respectively the
  quadrupole moment to vanish. We stress that this is one of only two places
  where the additional assumption $\Omega^* = B(0,r_*)$ enters in our proof.
  The other one is the fact that $\Omega^* = B(0,r_*) \subset Q_l$ for
  $\vartheta \le \frac12$ if $\vartheta l^3 = |\Omega^*|$.
  We again refer to the Appendix for the necessary modifications to obtain an
  equation similar to \eqref{eq_multipole momenta vanish} in the absence of
  the assumption $\Omega^* = B(0,r_*)$.

  By \eqref{eq_multipole momenta vanish}, if we plug in the expansion \eqref{eq_taylor expansion multipole} and
  set $a = (r-s)\till + x$ and $b=y$, equation
  \eqref{eq_upper_bound_far_field1} is bounded from above by
  \begin{align} \nonumber
    &C \sum_{(r,s) \in V_{\textnormal{far}}}
    \int_{Q_\till} \int_{Q_\till} |1_{\tilo}(x) - \vartheta|
    \frac{|y|^3}{|r\till - s\till + x|^4} |1_{\tilo}(y) -\vartheta|
    \dd{x}\dd{y}
    \\ \label{eq_upper_bound_far_field2}
    \le C &\sum_{(r,s) \in V_{\textnormal{far}}}
    \int_{Q_\till} \frac{1_{\tilo}(x) +\vartheta}{|(r- s)\till + x|^4} \dd{x}
    \int_{Q_\till} \pqty{1_{\tilo}(y) + \vartheta} |y|^3 \dd{y}.
  \end{align}
  For $x \in Q_\till$ we have $|x| \le \sqrt{3}\, \till/2$. Since $|r-s| > M =
  10$, it follows $|(r-s)\tilde l + x| \ge \tilde l |r-s| - |x| \ge \frac12
  \tilde l |r-s|$. Equation \eqref{eq_upper_bound_far_field2} can thus be estimated from above by
  \begin{align} \nonumber
    C &\sum_{(r,s) \in V_{\textnormal{far}}} \frac{2^4}{\till^4|r - s|^4}
    \int_{Q_\till} \pqty{1_{\tilo}(x) + \vartheta} \dd{x}
    \pqty{ \int_{\tilo} |y|^3 \dd{y} + \vartheta \int_{Q_\till} |y|^3 \dd{y} }
    \\ \label{eq_upper_bound_far_field3}
    \le \frac{C}{\till^4} &\sum_{(r,s) \in V_{\textnormal{far}}} \frac1{|r - s|^4}
    \pqty{ 1 + \vartheta \till^{\;\!3}} \pqty{ 1 + \vartheta \till^{\;\!6} }  \le  C l^{-1}
    (1 + \vartheta l^3) (l^{-3} + \vartheta l^3) \sum_{(r,s) \in V_{\textnormal{far}}} \frac1{|r - s|^4},
  \end{align}
  where we again used $1 \le \lambda_\tl \leq 2$, and hence, $l \le \till \leq 2 l$,
  from \eqref{eq_bound on lambda tl crude}.

  Since $\vartheta l^3 = A^*$, it remains to evaluate the last sum over the set
  $V_{\textnormal{far}}$.  Recalling the bound on the number of boxes $N_\tl \leq
  \frac{L^3}{l^3}$, we have
  \begin{align}
    \sum_{(r,s) \in V_{\textnormal{far}}} \frac1{|r - s|^4}
    \le \sum_{r \in \mathcal C_\tl} \sum_{\substack{s\in
    \Z^3 \\ s\ne r}} \frac1{|r - s|^4}
    \leq \frac{L^3}{l^3} \sum_{\substack{s\in
    \Z^3\backslash \Bqty{0}}} \frac1{|s|^4}
    = C \vartheta L^3. \label{eq_upper_bound_far_field4}
  \end{align}
  Putting together
  \eqref{eq_upper_bound_far_field1}, \eqref{eq_upper_bound_far_field3} and \eqref{eq_upper_bound_far_field4} and
  using $\vartheta l^3 = A^*$, we obtain
  \begin{align*}
    &\frac1{\vartheta L^3}  \E_\tl^{\text{(far)}} \leq Cl^{-1} =
    C\vartheta^{1/3}.
  \end{align*}


  \textit{Step 4: Conclusion.}    Inserting the bounds proved in Steps 1-3 back into \eqref{eq_splitting into self, near, far}, we obtain
  \[  \frac{\E_\tl[\Omega_\tl]}{\vartheta L^3} = \frac{\E^\text{(self)}_\tl + \E^\text{(near)}_\tl + \E^\text{(far)}_\tl }{\vartheta L^3} \leq \frac{E(A^*)}{A^*} + C \vartheta^{1/3} + C \frac{1}{\vartheta^{1/3} L}. \]
  The proof of Proposition \ref{pr_upper bound} is therefore complete.
\end{proof}


\section{Lower Bound of the Ground State Energy}


In this section, we give the proof of the lower bound from Theorem \ref{th_gse asymptotics}. Again, we restate here for convenience the result we want to prove.

\begin{proposition}[Lower Bound] \label{pr_lower bound}
  There is a constant $C > 0$ such that for all $\vartheta \in (0,1]$, $L >0$,
  \begin{align} \label{lower_bound}
    \frac{E_\tl}{\vartheta L^3} \,\geq\, \frac{E(A^*)}{A^*}  - C \vartheta^{1/5}.
  \end{align}
\end{proposition}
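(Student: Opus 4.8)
The plan is to localize the energy into cubes of an intermediate scale $\ell$ (to be chosen as a power of $\vartheta^{-1}$, roughly $\ell \sim \vartheta^{-1/3+\varepsilon}$), show that on each such cube the energy is essentially bounded below by $A^{-1}E(A)$ times the local mass, then use the convexity/superadditivity property $E(A)/A \ge E(A^*)/A^*$ of the full-space problem (which is exactly what defines $A^*$) to conclude. The subtle point, flagged in the introduction, is that the Coulomb interaction is long-range and not screened a priori, so a naive restriction of the interaction integral to a single cube loses the cross terms; controlling these requires the sliding method of Conlon--Lieb--Yau. Concretely, I would proceed as follows.

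\emph{Step 1: Sliding and localization.} Partition $Q_L$ into cubes $Q_\ell(r)$, $r \in \mathcal{C}$, of side length $\ell$ with $\ell \mid L$ (or handle the boundary layer, which costs only $O(L^2/\ell)$ cubes and hence a lower-order error after division by $L^3$). Following \cite{CoLiYa}, write the Coulomb kernel $|x-y|^{-1}$ as an average over translates of a fixed nonnegative ``cube-averaged'' potential plus a controlled error, so that
\[
  \frac12 \iint (1_\Omega - \vartheta)\frac{1}{|x-y|}(1_\Omega-\vartheta)\,\dd x\,\dd y
  \;\ge\; \sum_{r}\Big[\tfrac12\iint_{Q_\ell(r)\times Q_\ell(r)}(1_\Omega-\vartheta)\frac{1}{|x-y|}(1_\Omega-\vartheta)\Big]
  \;-\;(\text{error}),
\]
where the error term is bounded using the neutrality of $1_\Omega - \vartheta$ inside the box together with the good decay produced by the sliding average. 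The perimeter is superadditive under this partition: $\Per(\Omega) \ge \sum_r \Per(\Omega \cap Q_\ell(r); Q_\ell(r))$, up to interface contributions one can discard (they have the right sign).

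\emph{Step 2: Local lower bound and mass bookkeeping.} On each box, let $A_r := |\Omega \cap Q_\ell(r)|$. The localized energy on $Q_\ell(r)$ is, after dropping the nonnegative electron self-energy and using $\E[\Omega\cap Q_\ell(r)] \ge E(A_r) \ge \tfrac{E(A^*)}{A^*} A_r$, bounded below by $\tfrac{E(A^*)}{A^*} A_r$ minus an error coming from the $-\vartheta$ cross terms (bounded by $C\vartheta A_r \ell^2 + C\vartheta^2\ell^5$ per box, using that the interaction of a mass-$A_r$ set with a uniform background on a cube of side $\ell$ is at most $C\vartheta A_r \ell^2$). Summing over $r$ and using $\sum_r A_r = |\Omega| = \vartheta L^3$ together with $N \le L^3/\ell^3$ gives
\[
  \frac{E_\tl}{\vartheta L^3} \;\ge\; \frac{E(A^*)}{A^*} \;-\; C\vartheta \ell^2 \;-\; C\vartheta^2\ell^5 \ell^{-3}\vartheta^{-1} \;-\; \frac{(\text{sliding error})}{\vartheta L^3}.
\]

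\emph{Step 3: Optimizing the scale.} The sliding error from Step 1 should scale like (small negative power of $\ell$) times $\vartheta L^3$, i.e. contribute $O(\ell^{-\alpha})$ for some $\alpha>0$ to $E_\tl/(\vartheta L^3)$; balancing $\vartheta \ell^2$ against $\ell^{-\alpha}$ and optimizing over $\ell$ yields a net error of size $\vartheta^{\beta}$ with $\beta = \alpha/(\alpha+2) > 0$. The exponent $1/13$ in the statement presumably comes out of this optimization once the precise power $\alpha$ in the sliding estimate is tracked (together with the $\vartheta$-powers appearing in the cube-averaging error and in the requirement that the cubes be large enough for the argument, e.g. $\ell \gtrsim \vartheta^{-1/3}$ so that each carries $O(1)$ units of mass). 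I would not try to optimize $1/13$ further; any fixed positive power suffices for the qualitative statement $e(\vartheta) = \vartheta(E(A^*)/A^* + o(1))$.

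\emph{Main obstacle.} The crux is Step 1: making the sliding estimate quantitative enough to get an explicit positive power of $\vartheta$, while simultaneously keeping the cubes large enough that the localized perimeter-plus-self-energy genuinely controls $E(A_r)$ (one cannot take $\ell$ too small, or boxes containing a single nucleon see essentially no benefit from $E(A^*)/A^*$ being the infimum). Handling boxes where $A_r$ is anomalously large or the set $\Omega \cap Q_\ell(r)$ is badly spread out — so that $\E[\Omega \cap Q_\ell(r)]$ is still $\ge E(A_r)$ but the cross-term estimate needs the isoperimetric/concentration structure — is where most of the technical work will go.
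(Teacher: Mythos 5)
Your plan follows the same high-level route as the paper: localize the Coulomb interaction by the Conlon--Lieb--Yau sliding method, localize the perimeter, drop the background cross-terms inside each cube, use $E(A)/A \ge E(A^*)/A^*$, and optimize the localization scale against the errors. The identification of the tools and the bookkeeping $\sum_r A_r = \vartheta L^3$ are correct. However, there is a genuine gap in Step~1 that is not a detail but the crux of the argument.

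The sliding lemma does not return the Coulomb kernel inside the localized cubes. What CoLiYa gives (Lemma~\ref{le_localization_coulomb} here) is that the full Coulomb quadratic form dominates an average over translates $\mu$ of $\chi(\cdot/K-\mu)\,Y_{\omega(t)/K}(\cdot)\,\chi(\cdot/K-\mu)$, where $Y_\omega$ is a \emph{Yukawa} potential whose range $\omega(t)\sim t^{-4}$ is dictated by the smoothness of the cutoff $\chi_t$. Thus the ``localized energy'' you get on each cube is a perimeter term plus a \emph{Yukawa} self-repulsion, and it is not a priori comparable to the full-space Coulomb functional $\E$ whose infimum defines $E(A)$. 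To recover $E(A)/A \ge E(A^*)/A^*$, the paper performs a \emph{second} localization to a finer scale $R \ll K$: on a set of diameter $\lesssim R$ one has $Y_{\omega(t)/K}(x-y) \ge e^{-C R/(t^4 K)}\,|x-y|^{-1}$, and the smallness of the lost factor $1-e^{-CR/(t^4K)}$ must be traded against the new perimeter-localization error $\sim |\Omega|/R$. It is precisely this three-parameter optimization over $t$, $K$, $R$ (with $R=t^2K^{1/2}$, $t=K^{-1/6}$, $K\sim\vartheta^{-6/13}$), together with the errors $\vartheta K^2$ and $1/K$ you already identified, that produces $\vartheta^{1/13}$; your heuristic ``sliding error $\sim\ell^{-\alpha}$'' misses that the dominant loss is multiplicative (a factor $(1-t)^3 e^{-CR/(t^4K)}$ in front of $E(A^*)/A^*$), not additive, and is tied to the Yukawa range, not to $\ell$ alone.

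A smaller but real issue: you invoke superadditivity of the \emph{relative} perimeter $\Per(\,\cdot\,;Q_\ell(r))$, which is clean, but then the local functional is not the one appearing in $E(A)$, which uses the full perimeter $\Per(\Omega\cap Q)$. The paper instead uses the full perimeter of the pieces and absorbs the interface contribution via the averaged bound $\int_{Q_1}\sum_m \Per(\Omega\cap Q_K(m+\mu))\,\dd\mu \le \Per(\Omega)+6|\Omega|/K$ (Lemma~\ref{le_localization_perimeter}), choosing a good translate $\mu_0$. With relative perimeter alone, the implication $\E[\Omega^{(m)}]\ge E(A_m)$ would not be available.
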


The proof of Proposition \ref{pr_lower bound} is based on reducing the problem
to a smaller length scale $1 \ll R \ll L$.

We define the Yukawa potential
\[ Y_{\omega}(x) = \frac{ \e^{-\omega|x|} }{|x|} \qquad \text{for} \quad x \in \R^3 \quad \text{and } \quad \omega > 0. \]
Using $Y_\omega$, we can bound the interaction part of $\mathcal E_\tl[\Omega]$ from below as follows.

\begin{lemma}[Lower bound on the interaction term]
  \label{le_localization_coulomb}
  There is $C > 0$ such that for all $L>0$, all $\vartheta \in [0,1]$, all $\omega > 0$ and all $\Omega\subset Q_L$, we have
  \begin{alignat*}{1}
    &\int_{\R^3} \int_{\R^3} \pqty{ 1_\Omega(x) - \vartheta 1_{Q_L}(x) }
    \frac1{|x-y|}\pqty{ 1_\Omega(y) - \vartheta 1_{Q_L}(y) } \dd{x} \dd{y}
    \\
    \ge & \int_{\R^3} 1_\Omega(x) Y_{\omega} (x-y)
    \, 1_\Omega(y) \dd x \dd y - C |\Omega| \vartheta \omega^{-2}.
  \end{alignat*}
\end{lemma}

\begin{proof}
First, we can estimate
  \begin{align*}
  & \qquad  \int_{\R^3} \int_{\R^3} \pqty{ 1_\Omega(x) - \vartheta 1_{Q_L}(x) }
    \frac1{|x-y|}\pqty{ 1_\Omega(y) - \vartheta 1_{Q_L}(y) } \dd{x} \dd{y} \\
   & \geq \int_{\R^3} \int_{\R^3} \pqty{ 1_\Omega(x) - \vartheta 1_{Q_L}(x) }
    Y_\omega(x-y) \pqty{ 1_\Omega(y) - \vartheta 1_{Q_L}(y) } \dd{x} \dd{y}, 
  \end{align*}
   because $\F[\frac1{|x|}] = \sqrt{\frac{\pi}{2}} \frac{1}{|k|^2} \ge
  \sqrt{\frac{\pi}{2}} \frac{1}{|k|^2 + \omega^2} = \F[Y_\omega](k)$, where $\F$ denotes the Fourier transform. Next, 
  \begin{align} \nonumber
    &\int_{\R^3} \! \int_{\R^3} \pqty{ 1_\Omega(x) - \vartheta 1_{Q_L}(x) }
    \,  Y_{\omega} (x-y)
    \, \pqty{ 1_\Omega(y) - \vartheta 1_{Q_L}(y) } \dd{x} \dd{y}
    \\
    \nonumber
    &\ge
    \int_{\R^3} \! \int_{\R^3} 1_\Omega(x) \,
    Y_{\omega} (x-y)  \, 1_\Omega(y) \dd x \dd y
    - 2 \vartheta \int_{\R^3} \! \int_{\R^3} 1_\Omega(x) Y_{\omega}(x-y) 1_{Q_L}(y)
    \dd x \dd y  \\
    &\ge
    \int_{\R^3} \! \int_{\R^3} 1_\Omega(x) Y_{\omega} (x-y)
    \, 1_\Omega(y) \dd x \dd y - C |\Omega| \vartheta \omega^{-2},
    \label{eq_indicator function smaller cubes}
  \end{align}
  where we bounded
  $\int_{Q_L} Y_{\omega}
  (x-y) \dd{y} \le \int_{\R^3} \frac{e^{-\omega |y|}}{|y|} \dd y \leq
  C\omega^{-2}$.
\end{proof}

We also need to control the behavior of the perimeter term under localization of $\Omega \subset Q_L$ to smaller boxes. The following lemma is useful for this purpose.

\begin{lemma}[Localization of the perimeter term] \label{le_localization_perimeter}
  Let $\Omega \subset \R^3$ have finite perimeter. Then for every $R > 0$,
  \[ \Per(\Omega) \geq \sum_{m \in \Z^3} \int_{Q_1} \Per(\Omega \cap Q_R(m + \mu)) \dd \mu  - \frac{6 |\Omega|}{R}. \]
\end{lemma}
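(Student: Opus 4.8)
The plan is to use the well-known sublinearity of the perimeter under intersection with a family of sets that forms a partition-like covering, combined with an averaging over translations so that the cutting surfaces are spread out and their total contribution is controlled by the measure of $\Omega$ rather than its perimeter. Concretely, for a fixed translation parameter $\mu \in Q_1$, the cubes $\{Q_K(m+\mu)\}_{m\in\Z^3}$ tile $\R^3$ up to a null set, so $\Omega$ is, up to measure zero, the disjoint union of the pieces $\Omega \cap Q_K(m+\mu)$. The first step is the pointwise (in $\mu$) inequality
\[
  \Per(\Omega) + \sum_{m\in\Z^3}\Per_{\partial Q_K(m+\mu)}(\Omega)
  \;\geq\; \sum_{m\in\Z^3}\Per\bigl(\Omega\cap Q_K(m+\mu)\bigr),
\]
where $\Per_{\partial Q_K(m+\mu)}(\Omega) = \mathcal H^2(\partial^* \Omega^{(1)} \cap \partial Q_K(m+\mu))$ measures the part of the new boundary of each piece that sits on the cutting plane; this is the standard fact that cutting a set of finite perimeter by a Lipschitz domain splits the reduced boundary into the interior part (contained in $\partial^*\Omega$) plus the part of the cube's boundary lying inside $\Omega$. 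Rearranging gives
\[
  \Per(\Omega) \;\geq\; \sum_{m\in\Z^3}\Per\bigl(\Omega\cap Q_K(m+\mu)\bigr) \;-\; \sum_{m\in\Z^3}\mathcal H^2\!\bigl(\Omega^{(1)}\cap\partial Q_K(m+\mu)\bigr).
\]

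The second step is to integrate this inequality over $\mu \in Q_1$ and to bound the averaged error term. The collection of boundary faces $\bigcup_m \partial Q_K(m+\mu)$, as $\mu$ ranges over $Q_1$, consists of the three families of planes $\{x_i = K(k + \mu_i)\}$ for $i=1,2,3$ and $k\in\Z$. For fixed $i$, by Fubini the integral over $\mu_i\in(-1/2,1/2)$ of $\mathcal H^2(\Omega^{(1)}\cap\{x_i=K(k+\mu_i)\})$, summed over $k\in\Z$, equals $\tfrac1K|\Omega|$ — one sweeps each slice exactly once and the Jacobian of the change of variables $\mu_i \mapsto K\mu_i$ contributes the factor $1/K$. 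Summing over the three coordinate directions yields total averaged error $\tfrac{3}{K}|\Omega|$; since we must also integrate over the two remaining components of $\mu$ (over a set of measure $1$) this stays $\tfrac{3}{K}|\Omega|$, but because each face is counted twice (once as the right face of $Q_K(m+\mu)$, once as the left face of the neighbour) we get the stated constant $6$. This is essentially the coarea/slicing argument and is where one must be slightly careful about the measure-theoretic representative $\Omega^{(1)}$ of $\Omega$, but it is routine.

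The main obstacle, such as it is, is the first step: justifying the decomposition of $\Per(\Omega\cap Q_K(m+\mu))$ into a piece of $\partial^*\Omega$ plus the trace on $\partial Q_K(m+\mu)$, and doing so for \emph{almost every} $\mu$ so that the trace term is the $\mathcal H^2$-measure of $\Omega^{(1)}\cap\partial Q_K(m+\mu)$ with no extra boundary-overlap pathologies. The clean way is to invoke the standard theory of sets of finite perimeter (e.g.\ Maggi's book): for a.e.\ $\mu$ each cutting plane meets $\partial^*\Omega$ in an $\mathcal H^2$-null set and the set $\Omega$ has a well-defined trace on it, so $\Per(\Omega\cap Q_K(m+\mu)) = \mathcal H^2(\partial^*\Omega\cap Q_K(m+\mu)) + \mathcal H^2(\Omega^{(1)}\cap\partial Q_K(m+\mu))$; summing over $m$ the first terms telescope to $\Per(\Omega)$ (the interior faces are shared, the exterior contributions exhaust $\partial^*\Omega$). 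Then integrate in $\mu$ and apply Step 2. An alternative, slightly softer route avoiding fine measure theory is to first approximate $\Omega$ by smooth sets $\Omega_j$ with $\Per(\Omega_j)\to\Per(\Omega)$ and $|\Omega_j\triangle\Omega|\to0$, prove the inequality for smooth $\Omega_j$ by elementary means, and pass to the limit — but one has to be careful that the $\mu$-averaged trace terms pass to the limit, which again brings in slicing. Either way the geometric content is elementary; the work is purely in setting up the BV bookkeeping correctly.
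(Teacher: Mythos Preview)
Your proposal is correct and follows essentially the same route as the paper: decompose $\Per(\Omega\cap Q_K(m+\mu))$ into the interior part of $\partial^*\Omega$ plus the trace on the cube faces, then average over $\mu\in Q_1$ and compute the face contribution via Fubini/slicing in each coordinate direction, picking up the factor $2$ from double-counting shared faces and the factor $3$ from the three directions to obtain $6|\Omega|/K$. The only difference is cosmetic: you are more explicit about the measure-theoretic representative $\Omega^{(1)}$ and the a.e.-in-$\mu$ trace argument, whereas the paper treats this step informally.
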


\begin{proof}
  In every box, the boundary of $\Omega \cap Q_R(m + \mu)$ consists of two parts: the portion of $\partial \Omega$ lying inside $Q_R(m + \mu)$, and the portion of $\Omega$ intersecting $\partial Q_R(m + \mu)$, which is added by partitioning $\Omega$ into boxes. We therefore have that
  \begin{align} & \sum_{m \in \Z^3} \int_{Q_1}  \Per (\Omega \cap Q_R(m + \mu))
    \dd \mu \nonumber \\
    & \leq \int_{Q_1} \sum_{m \in \Z^3} \h^2(\partial \Omega \cap Q_R(m + \mu)) \dd
    \mu + \int_{Q_1} \sum_{m \in \Z^3} \h^2(\Omega \cap \partial Q_R(m + \mu)) \dd
    \mu \nonumber \\
    & \leq \Per (\Omega) + \int_{Q_1} \sum_{m \in \Z^3} \h^2(\Omega \cap \partial Q_R(m + \mu)) \dd \mu. \label{eq_localization perimeter}
  \end{align}
  Here, $\h^2$ denotes two-dimensional Hausdorff measure. It remains to evaluate the second term in \eqref{eq_localization perimeter}. Since all sets appearing there are subsets of faces of cubes, we can decompose
  \[ \bigcup_{m \in \Z^3} \Omega \cap \partial Q_R(m + \mu) = \bigcup_{i = 1}^3
  \bigcup_{l \in \Z} \Omega \cap \Bqty{ x \in \R^3\!:\, x_i = R\qty(l + \frac12 + \mu_i)}, \]
  i.e. we distinguish the 'slices' of $\Omega \cap \partial Q_R(m + \mu)$
  according to the coordinate hyperplane they are parallel to. Note that $\mathcal H^2$-almost every
  point in one hyperplane is contained in the boundary of exactly two cubes adjacent to
  the plane. Thus the union
  $\bigcup_{i=1}^3$ is disjoint up to an $\h^2$-null set and we obtain
  \[ \int_{Q_1} \sum_{m \in \Z^3} \h^2(\Omega \cap \partial Q_R(m + \mu)) \dd \mu
    = 2\sum_{i = 1}^3 \int _{\substack{[-1/2, 1/2]^3}} \dd \mu_1 \dd \mu_2 \dd \mu_3 \sum_{l \in \Z}
  \h^2(\Omega \cap \{ x_i = R(l + 1/2 + \mu_i) \} ). \]
The integrand on the right hand side only depends on \emph{one} of the $\mu_i$. We can therefore do the $\dd \mu_j$-integrations with $j \neq i$ to find that
  \begin{align*}
    & \int_{Q_1} \sum_{m \in \Z^3} \h^2(\Omega \cap \partial Q_R(m + \mu)) \dd \mu
    = 2\sum_{i = 1}^3 \int_{-1/2}^{1/2} \sum_{l \in \Z} \h^2(\Omega \cap \{ x_i =
    R(l + 1/2 + \mu_i) \} ) \dd \mu_i \\
  & = 2\sum_{i = 1}^3 \int_\R \h^2(\Omega \cap \{ x_i = R \mu_i \} ) \dd \mu_i = \frac{2}{R} \sum_{i = 1}^3\int_\R \h^2(\Omega \cap \{ x_i =  \mu_i \} ) \dd \mu_i = \frac{6 |\Omega|}{R} \end{align*}
    by Fubini's theorem. Plugging this in \eqref{eq_localization perimeter} completes the proof of Lemma \ref{le_localization_perimeter}.
\end{proof}



In the next lemma we combine the above estimates to obtain the crucial lower bound on the energy in terms of the auxiliary parameters $R$ and $\omega$. 

\begin{lemma} \label{le_bound on localized energy}
  For every $\Omega \subset Q_{L}$ with $|\Omega| > 0$ and every $R > 0$, we have that
  \[ \frac{ \E_\tl[\Omega]}{|\Omega| } \geq e^{-\sqrt{3} \omega R}
  \frac{E(A^*)}{A^*} - C \vartheta \omega^{-2} - \frac{6}{R}. \]
\end{lemma}

\begin{proof}
  Let $\Omega \subset Q_{L}$ and $R>0$.
  We split $\Omega$ into the (finite) disjoint union
  \begin{equation}
    \label{eq_splitting Omega 2nd localization}
    \Omega = \bigcup_{m \in \Z^3} \big( \Omega \cap Q_{R} (m+\mu_0) \big) =: \bigcup_{m \in \Z^3} \Omega^{(m)}
  \end{equation}
  for some $\mu_0 \in Q_1$ to be chosen below. Note that our choice of
  $\Omega^{(m)}$ in \eqref{eq_splitting Omega 2nd localization} ensures that
  $\text{diam}(\Omega^{(m)}) \leq \sqrt{3} R$.

  Hence, starting from Lemma \ref{le_localization_coulomb} and dropping the interactions between different boxes, we can
  estimate the energy from below as follows.
  \begin{align}
    \nonumber
    \E_\tl [\Omega] &\geq \sum_{m \in \Z^3} \qty( \Per (\Omega^{(m)}) +
    e^{-\omega \sqrt{3} R} \frac12 \iint_{ \Omega^{(m)} \times \Omega^{(m)}}
    \frac{\dd x \dd y}{|x-y|})  +  \mathcal P_R  - C |\Omega| \vartheta \omega^{-2} \\
    & \ge   e^{-\sqrt{3} \omega R} \sum_{m \in \Z^3} \E[\Omega^{(m)}] + \mathcal
    P_R - C |\Omega| \vartheta \omega^{-2}
    \label{eq_estimate energy E t K 2}
  \end{align}
  with the perimeter error term $\mathcal P_R := \Per(\Omega) - \sum_{m \in \Z} \Per (\Omega^{(m)})$.

  For every $m \in \Z^3$ with $|\Omega^{(m)}| >0$, we have
  $\frac{\E[\Omega^{(m)}]}{|\Omega^{(m)}|} \geq \frac{E(A^*)}{A^*}$, and hence
  \begin{equation}
    \label{eq_energy trick 1}
    \sum_{m \in \Z^3} \E[\Omega^{(m)}] = \sum_{m \in \Z^3, |\Omega^{(m)}| >0} \frac{\E[\Omega^{(m)}]}{|\Omega^{(m)}|} |\Omega^{(m)}| \geq \frac{E(A^*)}{A^*} \sum_{m \in \Z^3} |\Omega^{(m)}| = |\Omega| \frac{E(A^*)}{A^*}.
  \end{equation}
  Together with \eqref{eq_energy trick 1}, the lower bound \eqref{eq_estimate energy E t K 2} implies
  \begin{equation}
    \label{eq_estimate loc energy gs + perimeter}
    \E_{\tl}[\Omega] \geq e^{-\sqrt{3} \omega R} \frac{E(A^*)}{A^*} |\Omega|   +
    \mathcal P_R - C |\Omega| \vartheta \omega^{-2}.
  \end{equation}

  To bound the perimeter error $\mathcal P_R$ appropriately, recall from Lemma \ref{le_localization_perimeter} that we have the averaged estimate
  \begin{equation}
    \label{eq_estimate perimeter averaged}
    \int_{Q_1} \qty( \sum_{m \in \Z^3}  \Per(\Omega \cap Q_{R}(m + \mu)) ) \dd \mu \leq \Per(\Omega) + \frac{6 |\Omega|}{R},
  \end{equation}
  and therefore there exists $\mu_0 \in Q_1$ depending on $\Omega$ such that
  \begin{equation}
    \sum_{m \in \Z^3}  \Per(\Omega \cap Q_{R}(m + \mu_0)) \leq \Per[\Omega]  + \frac{6 |\Omega|}{R}.
  \end{equation}
  With this choice of $\mu_0$, we arrive at the bound
  \begin{equation}
    \label{eq_bound perimeter} \mathcal P_R = \Per(\Omega) - \sum_{m \in \Z} \Per (\Omega^{(m)}) \geq - \frac{6 | \Omega |}{R}.
  \end{equation}
  Combining \eqref{eq_estimate loc energy gs + perimeter} and \eqref{eq_bound perimeter} and dividing by $|\Omega|$, the statement of Lemma \ref{le_bound on localized energy} follows.
\end{proof}


It only remains to minimize the errors of the lower bound to the ground state energy $E_\tl$.

\begin{proof}[Proof of Proposition \ref{pr_lower bound}]
  Recalling that $|\Omega| = \vartheta L^3$, by Lemma \ref{le_bound on localized energy} we have
  \begin{equation}
    \label{eq_estimate GSE 3}
    \frac{E_\tl}{\vartheta L^3} \geq e^{-\sqrt{3} \omega R}  \frac{E(A^*)}{A^*}
    - C \vartheta \omega^{-2} - \frac{6}{R}.
  \end{equation}
  Since $e^{-x} \geq 1 - x$, from \eqref{eq_estimate GSE 3} we obtain
  \begin{align*}
    \frac{E_\tl}{\vartheta L^3}
    & \geq \qty(1 -  \sqrt{3} \omega R) \frac{E(A^*)}{A^*}
    - C \vartheta \omega^{-2} - \frac{C}{R} \\
    & \geq  \frac{E(A^*)}{A^*} - C \omega R - \frac{C}{R}
    - C \vartheta \omega^{-2}.
  \end{align*}
  Optimizing first in $R$, we take $R = \omega^{-1/2}$. With that choice, we have
  the inequality
  \[ \frac{E_\tl}{\vartheta L^3} \geq \frac{E(A^*)}{A^*}
  - C \omega^{1/2} - C \vartheta \omega^{-2}. \]
  Optimizing in $\omega$ gives $\omega = \vartheta^{2/5}$, and hence, we get
  \[ \frac{E_\tl}{\vartheta L^3} \geq \frac{E(A^*)}{A^*} - C \vartheta^{1/5}. \]
  The proof of Proposition \ref{pr_lower bound} is now complete.
\end{proof}


\begin{appendices}

  \section{Appendix: Removing the symmetry assumptions on $\Omega^*$}
  \label{appendix_removing_symmetry}


  We give here the necessary modifications to obtain the upper bound from Theorem \ref{th_gse asymptotics} if one does not make any symmetry assumption on the energy-per-volume minimizer $\Omega^*$.

  In case $\vartheta > 1/C$, inequality \eqref{upper_bound intro} is equivalent to the
  bound
  \begin{align*}
    \frac{E_\tl}{\vartheta L^3} \le C.
  \end{align*}
  To show this, we do not need to use minimizers in the construction of our
  test set $\Omega_\tl$. Hence, it suffices to consider balls of any fixed
  radius $r_* > 0$ arranged on a lattice just as it is done
  in the proof of Proposition \ref{pr_upper bound}.
  Henceforth we may therefore assume
  \begin{align*}
    \vartheta \le \frac1C.
  \end{align*}

  The proof strategy of Theorem \ref{th_gse asymptotics} in the absence of symmetry of $\Omega^*$ is identical to the one of the upper bound in Section \ref{section upper bound}. One constructs a competitor set made from energy-per-volume minimizers $\Omega^*$ arranged on a lattice. The difficulty one faces is that in proving the error bound on the far-field interaction term, one cannot invoke the symmetry of $\Omega^*$ to prove that the monopole, dipole and quadrupole moments vanish as in \eqref{eq_multipole momenta vanish}.

  We resolve this difficulty by fine-adjusting the parameters of our lattice. More precisely, we show that the analogue of \eqref{eq_multipole momenta vanish} can still be achieved by considering a \emph{suitably translated and rotated} copy of $\Omega^*$, arranged on a \emph{slightly distorted} lattice.

  \paragraph{Notation.} To deal with cuboids instead of cubes, it is necessary to introduce some appropriate notation. For $r \in \R^3$ and $\textrm{\textbf{\textit{l}}} \in\R^3$, we define
  \begin{equation}
    \label{eq_definition cuboid}
    Q_\bl(r) := \{ x \in \R^3 \, : \, |x_i - l_i r_i| < l_i/2  \textnormal{ for } i\in\Bqty{1,2,3}\},
  \end{equation}
  and $Q_\bl := Q_\bl(0)$.
  Once again, pay attention to the fact that by definition, $Q_\bl(r)$ is the cuboid of
  side lengths $\bl$ centered at the point with coordinates $l_i r_i$, not
  centered at $r$! This is because we intend to cover $Q_L$ with many
  copies of the cuboid of side lengths $\bl$. Then, the parameter $r\in\Z^3$ simply
  counts those cuboids in each direction.

  More generally, given $\bm\lambda = (\lambda_1,
  \lambda_2, \lambda_3) \in \R^3$ and $\Omega \subset \R^3$, we define the 'inhomogeneous
  dilation' by $\bm\lambda$ of the set $\Omega$ to be $\bm\lambda \Omega:= \{
  (\lambda_1 x_1, \lambda_2 x_2, \lambda_3 x_3) \, : \, x \in \Omega \}$. Observe that with these definitions, one has $\bm \lambda Q_L = Q_{L \bm\lambda}$. We shall use both notations according to convenience.

  Furthermore, for
  $\Omega \in \bm\lambda Q_L$, we set
  \[ \E_{\tl, \bm\lambda}[\Omega] = \Per(\Omega) + \frac12 \int_{\bm\lambda Q_L} \int_{\bm\lambda Q_L} (1_\Omega(x) - \vartheta) |x-y|^{-1} (1_\Omega(y) - \vartheta) \dd x \dd y, \]
  and define the corresponding ground state energy by
  \[E_{\tl, \bm\lambda} = \inf \{ \E_{\tl, \bm\lambda} [\Omega] \, : \, \Omega
  \subset \bm\lambda Q_L,  |\Omega| = \vartheta |\bm\lambda Q_L| \}. \]

  With this notation at hand, we can prove the following two key lemmas.

  \begin{lemma}[Vanishing Multipole Moments]
    \label{le_vanishing multipole moments}
    Let $\Omega \subset \R^3$ be a bounded set. Assume that two numbers $l_0>0$
    and $\vartheta \in (0,1]$ are given such that $|\Omega| = \vartheta l_0^3$.
    If $\eta_0 := \frac{l_0}{\mathrm{diam}(\Omega)}$ is larger than a universal constant, then there is an orthogonal matrix $U \in \R^{3 \times 3}$, a translation vector $y \in \R^3$ and a scaling vector $\bl = \bm\lambda l_0$ such that the set $\Omega_0 := U (\Omega + y)$ is contained in $Q_\bl$ and satisfies
    \begin{align*} 0 &= \int_{\R^3} (1_{\Omega_0}(x) - \vartheta 1_{Q_\bl}(x)) \dd x =
      \int_{\R^3} x_i (1_{\Omega_0}(x) - \vartheta 1_{Q_\bl}(x))\dd x \\
      & = \int_{\R^3} (3 x_i x_j - \delta_{ij} |x|^2)  (1_{\Omega_0}(x) - \vartheta 1_{Q_\bl}(x))  \dd x
    \end{align*}
    for all $i, j \in \Bqty{1,2,3}$. Furthermore, the scaling parameters $\bm\lambda =
    (\lambda_1,\lambda_2,\lambda_3)$ satisfy
    \begin{align*}
      \lambda_1 \lambda_2 \lambda_3 = 1 \qquad \text{and} \qquad |\lambda_i - 1| \le C\eta_0^{-2} \qquad (i = 1,2,3)
    \end{align*}
    for a universal constant $C > 0$.
  \end{lemma}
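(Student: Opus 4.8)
The plan is to produce the desired configuration in three stages: first translate $\Omega$ to kill its dipole moment, then rotate so that its inertia tensor is diagonal, and finally choose the anisotropic scaling $\bm\lambda$ to make the quadrupole moment of $1_{\Omega_0} - \vartheta 1_{Q_\bl}$ vanish. The monopole condition is automatic once we impose $|\Omega| = \vartheta |Q_\bl|$, which (since $|Q_\bl| = \lambda_1\lambda_2\lambda_3 l_0^3 = l_0^3$ under the constraint $\lambda_1\lambda_2\lambda_3 = 1$) is exactly the hypothesis $|\Omega| = \vartheta l_0^3$. So the real content is in the dipole and quadrupole conditions.

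First I would handle the dipole: replacing $\Omega$ by $\Omega + y$ with $y = -|\Omega|^{-1}\int_\Omega x\,\dd x$ centers the mass, giving $\int (1_{\Omega+y}(x) - \vartheta 1_{Q_\bl}(x))\,x_i\,\dd x = 0$ for any centered cuboid $Q_\bl$ (the cuboid's contribution vanishes by symmetry). Translation does not change the diameter, so $\eta_0$ is unaffected. Next, the quadrupole: after centering, let $T_{ij} := \int_{\Omega+y}(3x_ix_j - \delta_{ij}|x|^2)\,\dd x$ be the (traceless) quadrupole tensor of the set, which is symmetric; pick an orthogonal $U$ diagonalizing it, so that for $\Omega' := U(\Omega+y)$ the tensor $\int_{\Omega'}(3x_ix_j - \delta_{ij}|x|^2)\,\dd x$ is diagonal, say with entries $q_i$ satisfying $\sum_i q_i = 0$. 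Rotation preserves both the diameter and the centering, and $\Omega'$ is still contained in a ball of radius $\mathrm{diam}(\Omega)$ about the origin (up to the centering shift — one should note $\Omega'$ sits inside a centered ball of radius $\mathrm{diam}(\Omega)$, hence inside $Q_\bl$ whenever $\eta_0$ is large and $\bm\lambda$ is close to $\mathbf 1$).

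The core of the argument is now to solve for $\bm\lambda = (\lambda_1,\lambda_2,\lambda_3)$ with $\lambda_1\lambda_2\lambda_3 = 1$ such that the diagonal quadrupole components of $1_{\Omega'} - \vartheta 1_{Q_\bl}$ all vanish. Off-diagonal components stay zero by the reflection symmetries $x_k \mapsto -x_k$ of $Q_\bl$ and the fact that the $\Omega'$ part is already diagonal — wait, one must be a little careful: rescaling $Q_L$ is fine, but we are not rescaling $\Omega'$, so what I actually do is keep $\Omega' = \Omega_0$ fixed and choose the cuboid $Q_\bl$ around it. The diagonal quadrupole condition reads, for each $i$,
\[
q_i - \vartheta \int_{Q_\bl}(3x_i^2 - |x|^2)\,\dd x = 0,
\]
and a direct computation gives $\int_{Q_\bl}(3x_i^2 - |x|^2)\,\dd x = \tfrac{1}{12}\lambda_1\lambda_2\lambda_3 l_0^5\,(2\lambda_i^2 - \lambda_j^2 - \lambda_k^2) = \tfrac{1}{12} l_0^5(2\lambda_i^2 - \lambda_j^2 - \lambda_k^2)$ using $\lambda_1\lambda_2\lambda_3 = 1$. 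So with $\lambda_i = 1 + \varepsilon_i$ we get a system
\[
\tfrac{1}{12}\vartheta l_0^5\bigl(4\varepsilon_i - 2\varepsilon_j - 2\varepsilon_k\bigr) + O(\vartheta l_0^5 |\varepsilon|^2) = q_i \qquad (i,j,k \text{ distinct}),
\]
together with $\varepsilon_1 + \varepsilon_2 + \varepsilon_3 = O(|\varepsilon|^2)$ from the constraint. Since $|q_i| \le C\,|\Omega|\,\mathrm{diam}(\Omega)^2 = C\vartheta l_0^3 \,\mathrm{diam}(\Omega)^2$, the right-hand side scaled by $(\vartheta l_0^5)^{-1}$ is $O(\mathrm{diam}(\Omega)^2/l_0^2) = O(\eta_0^{-2})$. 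The linear map $(\varepsilon_i) \mapsto (4\varepsilon_i - 2\varepsilon_j - 2\varepsilon_k)$ on the hyperplane $\{\sum\varepsilon_i = 0\}$ is invertible (its eigenvalue on that hyperplane is $6$), so a standard quantitative implicit function theorem / Banach fixed point argument produces a solution with $|\varepsilon_i| \le C\eta_0^{-2}$, provided $\eta_0$ is large enough for the quadratic remainder to be dominated. Enforcing $\lambda_1\lambda_2\lambda_3 = 1$ exactly (rather than just to second order) is done by rescaling all $\lambda_i$ by the common factor $(\lambda_1\lambda_2\lambda_3)^{-1/3}$ at the end — this is an isotropic dilation, which does not affect the vanishing of the traceless quadrupole tensor, only the monopole term, which we then re-fix — so one runs the fixed-point argument in the two-dimensional hyperplane from the start to avoid this shuffle. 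Finally, with $\bm\lambda$ this close to $\mathbf 1$ and $\eta_0$ large, $\Omega_0 = \Omega' \subset B(0, \mathrm{diam}(\Omega)) \subset Q_\bl$, completing the proof.

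The main obstacle I anticipate is the quantitative fixed-point step: making precise that the quadratic error terms in the expansion of $\int_{Q_\bl}(3x_i^2 - |x|^2)$ are genuinely negligible relative to the linear part across the full relevant range, i.e. pinning down exactly how large "$\eta_0$ large enough" must be and checking the invertibility constant is clean. Everything else — centering, diagonalizing, the algebra of the moments of a cuboid — is routine.
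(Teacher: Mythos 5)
Your proposal is essentially correct, and the first two steps (centering to kill the dipole, diagonalizing the quadrupole by an orthogonal $U$) are exactly what the paper does. Where you diverge is in solving for $\bm\lambda$. The paper substitutes $\lambda_3 = 1/(\lambda_1\lambda_2)$, adds and subtracts the two remaining quadrupole equations, changes variables to $X = (\lambda_1^2 + \lambda_2^2)/2$, and reduces everything to a single cubic $p(X) = X^3 - c_1 X^2 - c_2^2 X - 1 + c_1 c_2^2 = 0$ with $|c_i| = O(\eta_0^{-2})$, then invokes Rouch\'e's theorem in a disk of radius $\sim \eta_0^{-2}$ around $X = 1$ to extract a real root $X_0$ with $|X_0 - 1| \le C\eta_0^{-2}$, from which $\lambda_1, \lambda_2, \lambda_3$ are read off. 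You instead linearize $\lambda_i = 1 + \varepsilon_i$ on the constraint hyperplane $\{\sum\varepsilon_i = 0\}$, observe that the linearized map has eigenvalue $6$ there, and close with a quantitative inverse/implicit function or Banach fixed-point argument. Both are perturbative arguments of the same flavor and both work; the Rouch\'e route has the advantage of being fully explicit (no need to chase constants through a fixed-point iteration, since the location of the root is pinned down directly, and reality follows from uniqueness plus conjugation symmetry), while your IFT route is more conceptual, generalizes more readily, and makes the "eigenvalue $6$" invertibility transparent. You also correctly flag and resolve the subtlety that an isotropic rescaling to enforce $\lambda_1\lambda_2\lambda_3 = 1$ after the fact would spoil the quadrupole balance, which is exactly why one should work on the constraint surface from the start -- the paper avoids this entirely by eliminating $\lambda_3$ from the outset. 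The only loose end you acknowledge (quantifying how large $\eta_0$ must be for the quadratic remainder to be dominated) is real but routine, and the paper's cubic/Rouch\'e argument handles it no less implicitly.
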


  \begin{remark}
    We point out that since the proof below does not use the special form of
    $1_\Omega$ as an indicator function, the statement of Lemma \ref{le_vanishing
    multipole moments} remains true if one replaces $1_\Omega$ by an arbitrary charge distribution $\rho \geq
    0$, $\rho \in L^1(\R^3)$, with compact support.
  \end{remark}

  \begin{proof}[Proof of Lemma \ref{le_vanishing multipole moments}]
    Let $\Omega \subset \R^3$ satisfy $|\Omega| = \vartheta l_0^3$. We first observe that since rotations and translations do not change the volume $|\Omega| = \vartheta l_0^3$, we always have
    \[ 0 = \int_{\R^3} (1_{U(\Omega + y)}(x) - \vartheta 1_{Q_\bl}(x)) \dd x, \]
    as long as the constraint $\lambda_1 \lambda_2 \lambda_3 = 1$ is satisfied, which implies $|Q_\bl| = l_0^3$.

    Next, we claim that up to replacing $\Omega$ by its translate $\Omega + y$ for a suitable vector $y \in \R^3$, we may achieve that
    \begin{equation}
      \label{eq_dipole moment vanishes}
      0 = \int_{\R^3} x_i (1_{\Omega }(x) - \vartheta 1_{Q_\bl})\dd x  \qquad \text{for all } i = 1,2,3.
    \end{equation}
    for every $\bl \in \R^3$. Indeed, the cube $Q_\bl$ is symmetric with respect to the coordinate planes and thus
    \[ 0 =  \vartheta \int_{\R^3} x_i  1_{Q_\bl}(x) \dd x. \]
    Moreover, for $y \in \R^3$ one has
    \[ \int_{\R^3} 1_{\Omega + y} (x) x_i \dd x = \int_{\R^3} 1_\Omega(x) (x_i + y_i) \dd x = \int_\Omega x_i \dd x + y_i |\Omega|, \]
    Hence it suffices to set $y_i = - \frac{1}{|\Omega|} \int_\Omega x_i \dd x$. We continue for simplicity to denote the translated version $\Omega +y$ which satisfies \eqref{eq_dipole moment vanishes} by $\Omega$. Note also that if $\Omega$ satisfies \eqref{eq_dipole moment vanishes}, then so does $U \Omega$, for any invertible matrix $U \in \R^{3 \times 3}$.

    It remains to ensure the quadrupole moment to vanish by introducing appropriate $U \in \R^{3 \times 3}$ and $\bl = \bm\lambda l_0 \in \R^3$. Since the quadrupole moment of $\Omega$,
    \[ P = (P_{ij})_{i,j = 1,2,3} \qquad \text{with} \qquad P_{ij} := \int_{\Omega} (3 x_i x_j - \delta_{ij} |x|^2) \dd x, \]
    is a traceless symmetric $3 \times 3$-matrix with real entries, there is an orthogonal matrix $U \in \R^{3 \times 3}$ and numbers $a, b \in \R$ such that
    \begin{equation}
      \label{eq_diagonal quadrupole moment}
      \left[ {\begin{array}{ccc}
            a & 0 & 0 \\
            0 & b & 0 \\
            0 & 0 & - a - b
      \end{array} } \right]
      = U P U^T = \int_{\Omega} (3 (Ux)_i (Ux)_j - \delta_{ij} |Ux|^2) \dd x = \int_{U \Omega} (3 x_i x_j - \delta_{ij} |x|^2 ) \dd x.
    \end{equation}
    That is, up to replacing $\Omega$ by its rotated version $U \Omega =: \Omega_0$, whose monopole and dipole moments still vanish by the remarks made above, we can assume that its quadrupole moment is diagonal.

    To make the quadrupole moment of $(1_{\Omega_0} - \vartheta 1_{Q_\bl})$ vanish, we thus need to find a cuboid $Q_\bl$ of volume $|Q_\bl| =  l_0^3$ which contains $\Omega_0$ and satisfies
    \begin{equation}
      \label{eq_quadrupole}
      \begin{aligned}
        \vartheta \int_{Q_\bl} (3x_1^2 - |x|^2) \dd x = a, \\
        \vartheta \int_{Q_\bl} (3x_2^2 - |x|^2) \dd x = b.
      \end{aligned}
    \end{equation}

    Setting $l_1 = \lambda_1 l_0$, $l_2 = \lambda_2 l_0$ and $l_3 = \lambda_3 l_0 = \frac{l_0}{\lambda_1
    \lambda_2}$ (by the volume constraint), then by rescaling and using the relation $|\Omega| = \vartheta l_0^3$, the system \eqref{eq_quadrupole} is equivalent to

    \begin{equation} \label{eq_inverse function theorem}
      \begin{aligned}
        2 \lambda_1^2 - \lambda_2^2 - \frac{1}{\lambda_1^2 \lambda_2^2}
        &= \frac{12}{|\Omega| l_0^2}  a,
        \\
        - \lambda_1^2 + 2 \lambda_2^2 - \frac{1}{\lambda_1^2 \lambda_2^2}
        &= \frac{12}{|\Omega| l_0^2}  b.
      \end{aligned}
    \end{equation}
    By adding these two equations, respectively subtracting them, we obtain
    the equations

    \begin{equation} \label{equations_for_lambda}
      \begin{aligned}
        \lambda_1^2 + \lambda_2^2 - \frac{2}{\lambda_1^2 \lambda_2^2}
        &= \frac{12(a+b)}{|\Omega| l_0^2} =: 2c_1,
        \\
        \lambda_1^2 - \lambda_2^2 &= \frac{4(a-b)}{|\Omega| l_0^2} =: 2c_2.
      \end{aligned}
    \end{equation}
    Inserting the second equation of \eqref{equations_for_lambda} into the first one and changing to the center of mass
    coordinate $X = (\lambda_1^2 + \lambda_2^2)/2$ we get the equation
    \begin{align}
      X - \frac1{X^2 - c_2^2} &= c_1,
    \end{align}
    which is equivalent to the cubic equation
    \begin{align}
      p(X) := X^3 - c_1 X^2 - c_2^2 X - 1 + c_1c_2^2 &= 0.
    \end{align}
    It can be seen from \eqref{eq_diagonal quadrupole moment} that $|a+b| \leq 8 \text{ diam}(\Omega_0)^2 | \Omega_0|$. Therefore the
    definition \eqref{equations_for_lambda} of the $c_i$ implies that $|c_i| <
    48 \text{ diam}(\Omega_0)^2 l_0^{-2}$. Hence, if $\eta_0 = \frac{l_0}{\text{diam}(\Omega_0)}$ is large enough, the polynomial $p$ will be
    very close to $X^3-1$. Since $X^3 - 1$ has exactly one complex zero close to $1$
    (namely $1$), we can apply Rouch\'e's theorem in a ball of radius $\sim \eta_0^{-2}$ around $1$. Hence there exists exactly one complex zero $X_0$ of $p$
    with $|X_0 - 1| \le C \eta_0^{-2}$. Since the coefficients of $p$ are real, uniqueness
    of the zero implies that $X_0$ is in fact real.

    We therefore get solutions $\lambda_1, \lambda_2 >0$ of
    \eqref{equations_for_lambda} which satisfy
    \begin{align*}
      |\lambda_1 - 1| = \frac{|\lambda_1^2 - 1|}{\lambda_1 + 1}
      = \frac{|X_0 + c_2 -1|}{\lambda_1 + 1} \le C \eta_0^{-2},
      \\
      |\lambda_2 - 1| = \frac{|\lambda_1^2 - 1|}{\lambda_1 + 1}
      = \frac{|X_0 - c_2 -1|}{\lambda_1 + 1} \le C \eta_0^{-2}.
    \end{align*}
    Note that $\lambda_3 = 1/(\lambda_1\lambda_2)$ also satisfies $|\lambda_3 - 1|
    \le C \eta_0^{-2}$. Moreover, the fact that $\int_{\Omega_0} x_i \dd x = 0$ implies easily that $\Omega_0 \subset Q_{l_0}$ for every $l_0 \geq 2 \, \text{diam}(\Omega)$. This completes the proof of Lemma \ref{le_vanishing multipole moments}.
  \end{proof}

  Our next lemma shows that for $\bm \lambda$ close to $(1,1,1)$, we can replace the ground state energy of $Q_L$ by that of the cuboid $Q_{\bm \lambda L}$ with only a small error.

  \begin{lemma}[Approximating $E_\tl$ by a cuboid $E_{\tl, \bm\lambda}$]
    \label{le_approximation by cuboid}
    Suppose that $\bm\lambda = (\lambda_1, \lambda_2, \lambda_3)$ is such that $\lambda_1 \lambda_2 \lambda_3 = 1$ and assume that $\lambda_i \in [1- \delta, 1+ \delta]$ for $i = 1,2,3$, for some $\delta \in [0,1]$. Then we have
    \begin{equation}
      \label{eq_claimed bound on cuboid GSE}
      \E_\tl[\Omega] \leq (1+ C \delta) \E_{\tl, \bm\lambda}[ \bm\lambda \Omega], \qquad \text{for all } \Omega \subset Q_L,
    \end{equation}
    where $C >0$ is a universal constant independent of $\delta$, $\vartheta$, $L$ and $\Omega$.
    In particular, this implies
    \begin{equation}
      \label{eq_claimed bound on GSE after infimum}
      E_\tl \leq (1+ C \delta) E_{\tl, \bm\lambda}.
    \end{equation}
  \end{lemma}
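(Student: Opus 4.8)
The plan is to prove \eqref{eq_claimed bound on cuboid GSE} by a direct term-by-term comparison of the two energy functionals under the inhomogeneous dilation $\bm\lambda$, and then obtain \eqref{eq_claimed bound on GSE after infimum} by taking the infimum, using that $\Omega \mapsto \bm\lambda\Omega$ is a bijection between admissible sets for the two problems. First I would note that $\bm\lambda$ maps $Q_L$ onto $Q_{\bm\lambda L} = \bm\lambda Q_L$ and that, since $\lambda_1\lambda_2\lambda_3 = 1$, it preserves Lebesgue measure; in particular $|\bm\lambda\Omega| = |\Omega|$, so the mass constraint $|\Omega| = \vartheta L^3 = \vartheta|\bm\lambda Q_L|$ is respected and the map is a volume-preserving bijection between the two admissible classes. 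It remains to control the perimeter term and the Coulomb term separately under this near-isometry.

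For the perimeter, the key point is that an anisotropic linear map with singular values $\lambda_i \in (1-\delta, 1+\delta)$ changes the $2$-dimensional Hausdorff measure of a rectifiable surface by a factor that is bounded: if $\nu$ is the (a.e.\ defined) unit normal of $\partial^*\Omega$, the area element transforms by the factor $\lambda_1\lambda_2\lambda_3 |(\bm\lambda^{-1})^T \nu| = |(\lambda_2\lambda_3\nu_1, \lambda_1\lambda_3\nu_2, \lambda_1\lambda_2\nu_3)|$, which lies in $[(1-\delta)^2/(1+\delta), (1+\delta)^2/(1-\delta)] \subset [1-C\delta, 1+C\delta]$ for $\delta$ small. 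Hence $\Per(\bm\lambda\Omega) \le (1+C\delta)\Per(\Omega)$, i.e.\ $\Per(\Omega) \le (1+C\delta)\Per(\bm\lambda\Omega)$ after renaming; more carefully one writes $\Per(\bm\lambda\Omega) \ge (1-C\delta)\Per(\Omega)$, so $\Per(\Omega) \le (1+C\delta)\Per(\bm\lambda\Omega)$. For the Coulomb term, substituting $x = \bm\lambda x'$, $y = \bm\lambda y'$ in the double integral defining $\E_{\tl,\bm\lambda}[\bm\lambda\Omega]$ (again the Jacobians are $1$) gives exactly the integral of $(1_\Omega(x') - \vartheta)(1_\Omega(y') - \vartheta)$ against the kernel $|\bm\lambda(x'-y')|^{-1}$ over $Q_L \times Q_L$. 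The difference from the true Coulomb term comes only from comparing $|\bm\lambda z|^{-1}$ with $|z|^{-1}$; since $(1-\delta)|z| \le |\bm\lambda z| \le (1+\delta)|z|$, we get the pointwise bound $|\bm\lambda z|^{-1} \ge (1+\delta)^{-1}|z|^{-1} \ge (1-\delta)|z|^{-1}$, but the subtlety is that the integrand of the Coulomb term is \emph{not} sign-definite (because of the $-\vartheta$ background), so one cannot simply bound kernel against kernel.

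This sign issue is the main obstacle, and I would resolve it by writing the Coulomb energy in a manifestly positive form. The interaction $\int\int (1_\Omega - \vartheta 1_{Q_L})(x) |x-y|^{-1} (1_\Omega - \vartheta 1_{Q_L})(y)$ equals (up to the constant $4\pi$) the Dirichlet-type quantity $\int |\nabla\phi|^2$ where $\phi$ solves $-\Delta\phi = 1_\Omega - \vartheta 1_{Q_L}$; equivalently it is $c\int |\widehat{1_\Omega - \vartheta 1_{Q_L}}(\xi)|^2 |\xi|^{-2}\dd\xi$, which is nonnegative. Under the dilation $\bm\lambda$ the density $1_{\bm\lambda\Omega} - \vartheta 1_{\bm\lambda Q_L}$ has Fourier transform $g(\xi) = f(\bm\lambda^T\xi)$ where $f = \widehat{1_\Omega - \vartheta 1_{Q_L}}$, so the Coulomb energy of the dilated configuration is $c\int |f(\bm\lambda^T\xi)|^2|\xi|^{-2}\dd\xi = c\int |f(\eta)|^2 |\bm\lambda^{-T}\eta|^{-2}\dd\eta$ (the Jacobian of $\xi \mapsto \bm\lambda^T\xi$ being $1$), and since $|\bm\lambda^{-T}\eta|^{-2} \ge (1+\delta)^{-2}|\eta|^{-2} \ge (1 - C\delta)|\eta|^{-2}$ pointwise and the integrand $|f|^2|\cdot|^{-2}$ is nonnegative, we get $\frac12\,(\text{Coulomb of }\bm\lambda\Omega) \ge (1-C\delta)\,\frac12\,(\text{Coulomb of }\Omega)$, i.e.\ $\frac12(\text{Coulomb of }\Omega) \le (1+C\delta)\frac12(\text{Coulomb of }\bm\lambda\Omega)$. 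Adding this to the perimeter bound yields \eqref{eq_claimed bound on cuboid GSE}. Finally, given any $\Omega \subset \bm\lambda Q_L$ with $|\Omega| = \vartheta|\bm\lambda Q_L|$, the set $\bm\lambda^{-1}\Omega \subset Q_L$ has mass $\vartheta L^3$ and $\bm\lambda(\bm\lambda^{-1}\Omega) = \Omega$, so $\E_\tl[\bm\lambda^{-1}\Omega] \le (1+C\delta)\E_{\tl,\bm\lambda}[\Omega]$; taking the infimum over such $\Omega$ gives $E_\tl \le (1+C\delta)E_{\tl,\bm\lambda}$, which is \eqref{eq_claimed bound on GSE after infimum}. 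One small point to double-check is that the Fourier representation of the (finite-box, full-space-kernel) Coulomb term is legitimate for these compactly supported $L^\infty$ densities, which holds since $1_\Omega - \vartheta 1_{Q_L} \in L^1 \cap L^2$ with compact support, so its Fourier transform is bounded and the integral $\int |f|^2|\xi|^{-2}$ converges (near $\xi = 0$ because the total charge is zero, giving $f(\xi) = O(|\xi|)$; at infinity because $f \in L^2$).
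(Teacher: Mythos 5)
Your proposal is correct and follows essentially the same approach as the paper: the key step, passing to the Fourier representation of the Coulomb term so that the integrand $|\widehat{f}(\xi)|^2/|\xi|^2$ becomes manifestly nonnegative and the anisotropic dilation becomes a benign reparametrization of the frequency variable, is exactly what the paper does in \eqref{eq_cuboid estimate coulomb}. The only substantive variation is in the perimeter step: the paper works with the variational characterization $\Per(\Omega) = \sup_\varphi \int_\Omega \operatorname{div}\varphi\,\dd x$ and transforms the test field $\varphi \mapsto \varphi_{\bm\lambda}$, whereas you invoke the geometric transformation law for $\h^2$ of the reduced boundary under the linear map $\bm\lambda$; both are standard and give the same $(1+C\delta)$ bound, the paper's route being slightly more self-contained since it needs no structure theorem for sets of finite perimeter. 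One small inaccuracy worth flagging: your parenthetical justification of $\int |\widehat{f}(\xi)|^2 |\xi|^{-2}\,\dd\xi < \infty$ appeals to charge neutrality (so that $\widehat{f}(\xi) = O(|\xi|)$), but the lemma is stated for \emph{all} $\Omega \subset Q_L$, not only neutral ones, and in fact neutrality is not needed here: since $f = 1_\Omega - \vartheta 1_{Q_L} \in L^1$ with compact support, $\widehat{f}$ is bounded, and $|\xi|^{-2}$ is already locally integrable in $\R^3$, so the integral converges regardless of the total charge.
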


  \begin{proof}
    Let $\Omega \subset Q_L$ arbitrary and consider, for $\bm\lambda$ as in the statement, the set $\bm\lambda \Omega$. Note that since $\lambda_1 \lambda_2 \lambda_3 = 1$, we have $|\bm\lambda \Omega| = |\Omega|$ and $|\bm\lambda Q_{L}| = |Q_L|$.

    To prove \eqref{eq_claimed bound on cuboid GSE}, we consider the perimeter and Coulomb terms separately. Let us assume for definiteness in the following that $\lambda_1 \leq \lambda_2 \leq \lambda_3$. Firstly, recall the definition
    \begin{equation}
      \label{eq_definition perimeter measure theoretic}
      \Per(\Omega) = \sup \Bqty{ \int_\Omega \operatorname{div} \varphi(x) \dd x \, : \, \varphi \in C^1_c ( \R^3, \R^3), \| \varphi\|_\infty \leq 1 }.
    \end{equation}
    For any $\varphi$ as in \eqref{eq_definition perimeter measure theoretic} and $\bm\lambda \in \R^3$ with $\lambda_1 \lambda_2 \lambda_3 = 1$, define the vector field $\varphi_{\bm\lambda} \in C^1_c ( \R^3, \R^3)$ by setting its $i$-th component to be $\varphi_{\bm\lambda, i}(x) = \lambda_i \varphi_i(\lambda_1^{-1} x_1, \lambda_2^{-1} x_2, \lambda_3^{-1} x_3)$.
    One easily checks that
    \begin{equation}
      \label{eq_divergence identity}
      \int_\Omega \operatorname{div} \varphi(x) \dd x = \int_{\bm\lambda \Omega}
      \operatorname{div} \varphi_{\bm\lambda} (x) \dd x =
      \|\varphi_{\bm\lambda}\|_\infty  \int_{\bm\lambda \Omega} \operatorname{div}
      \frac{\varphi_{\bm\lambda}(x)}{\|\varphi_{\bm\lambda}\|_\infty }  \dd x.
    \end{equation}
    Moreover, we estimate
    \begin{equation}
      \label{eq_ell infty estimate}
      \|\varphi_{\bm\lambda}\|^2_\infty = \sup_{x \in \R^3} \sum_{i=1}^3 \lambda_i^2
      \varphi_i^2(\lambda_1^{-1} x_1, \lambda_2^{-1} x_2, \lambda_3^{-1} x_3) \leq
      \lambda_3^2 \|\varphi\|^2_\infty \leq 1 + C\delta.
    \end{equation}
    In view of the definition \eqref{eq_definition perimeter measure theoretic} of the perimeter, we can take the $\sup$ over all $\varphi \in C^1_c(\R^3, \R^3)$ with $\|\varphi\|_\infty \leq 1$ to obtain
    \begin{equation}
      \label{eq_cuboid estimate perimeter total}
      \Per(\Omega) = \sup_\varphi \int_\Omega \operatorname{div} \varphi(x) \dd x =
      \sup_\varphi \|\varphi_{\bm\lambda}\|_\infty  \int_{\bm\lambda \Omega}
      \operatorname{div}
      \frac{\varphi_{\bm\lambda}(x)}{\|\varphi_{\bm\lambda}\|_\infty }  \dd x \leq (1
      + C\delta) \Per(\bm\lambda \Omega),
    \end{equation}
    where we have used \eqref{eq_ell infty estimate} for the last inequality.

    To estimate the Coulomb term, it is convenient to pass to the Fourier representation. Set $f(x) := 1_\Omega(x) - \vartheta 1_{Q_L}(x)$, then
    \[ 1_{\bm\lambda \Omega} (x) - \vartheta 1_{\bm\lambda Q_L}(x) =
    f(\lambda_1^{-1} x_1, \lambda_2^{-1} x_2, \lambda_3^{-1} x_3) =: f_{\bm\lambda}(x), \]
    and one easily computes that $\F{f_{\bm\lambda}}(p) =
    \F{f}(\lambda_1 p_1, \lambda_2 p_2, \lambda_3 p_3)$. Therefore we have
    \begin{equation}
      \label{eq_cuboid estimate coulomb}
      \begin{aligned}
        & \frac{1}{4 \pi} \int_{Q_L} \int_{Q_L} \frac{(1_{\Omega}(x)-\vartheta)(1_{ \Omega}(y)-\vartheta)}{|x-y|}\dd x \dd y = \int_{\R^3} \frac{|\F{f}(p)|^2}{p^2} \dd p = \int_{\R^3} \frac{|\F{f_{\bm\lambda}}(p)|^2}{\sum_{i=1}^3 \lambda_i^{2} p_i^2} \dd p \\
        & \leq \lambda_1^{-2} \int_{\R^3} \frac{|\F{f_{\bm\lambda}}(p)|^2}{p^2} \dd p \leq
        (1 + C \delta) \frac{1}{4 \pi} \int_{\bm\lambda Q_L} \int_{\bm\lambda Q_L} \frac{(1_{\bm\lambda \Omega}(x)-\vartheta)(1_{ \bm\lambda \Omega}(y)-\vartheta)}{|x-y|}\dd x \dd y.
      \end{aligned}
    \end{equation}

    Combining estimates \eqref{eq_cuboid estimate perimeter total} and \eqref{eq_cuboid estimate coulomb}, the proof of \eqref{eq_claimed bound on cuboid GSE} is complete.

    The bound \eqref{eq_claimed bound on GSE after infimum} on the ground state energy follows from \eqref{eq_claimed bound on cuboid GSE} simply by taking the infimum over all $\Omega \subset Q_L$ with $|\Omega| = \vartheta L^3$. The proof of Lemma \ref{le_approximation by cuboid} is therefore complete.
  \end{proof}


  Using Lemmas \ref{le_vanishing multipole moments} and \ref{le_approximation by cuboid}, we are now ready to give the proof of the upper bound from Proposition \ref{pr_upper bound} without assuming any symmetry on $\Omega^*$. Since most parts are identical to the proof in Section \ref{section upper bound}, we only give the necessary modifications in the construction of the competitor set at the beginning of the proof.

  \begin{proof}[Proof of Proposition \ref{pr_upper bound} without symmetry of $\Omega^*$]
    As in the proof given in Section \ref{section upper bound}, let us set $l_0 : = {A^*}^{1/3} \vartheta^{-1/3}$ to be the characteristic length of the small boxes. Let $\Omega^*$ be some set satisfying $|\Omega^*| = A^*$ and $\mathcal E [ \Omega^*] = E(A^*)$. We may assume (up to changing $\Omega^*$ on a null-set) that $\text{diam}(\Omega^*) < \infty$, see  \cite[Lemma 4.1]{KnMu} and \cite[Lemma 4]{LuOt}. By Lemma \ref{le_vanishing multipole moments}, there are $U \in \R^{3 \times 3}$ orthogonal, $y \in \R^3$ and $\bm\lambda \in \R^3$ with $|\lambda_i - 1| \le Cl_0^{-2}$ and $\lambda_1 \lambda_2 \lambda_3 = 1$ such that setting $\bl = \bm\lambda l_0$, the set $\Omega^*_0 := U(\Omega^* + y)$ is contained in $Q_\bl$ and satisfies
    \begin{equation}
      \label{eq_multipole moments vanish proof of thm}
      0 = \int_{\R^3} (1_{\Omega^*_0}(x) - \vartheta 1_{Q_\bl}) \dd x =
      \int_{\R^3} x_i (1_{\Omega^*_0}(x) - \vartheta 1_{Q_\bl})\dd x = \int_{\R^3} (3 x_i x_j - \delta_{ij} |x|^2)  (1_{\Omega^*_0}(x) - \vartheta 1_{Q_\bl})  \dd x.
    \end{equation}
    By Lemma \ref{le_approximation by cuboid}, we have
    \begin{equation}
      \label{eq_energy error cuboid} E_\tl \leq (1+ C l_0^{-2}) E_{\tl, \bm\lambda} = (1 + C \vartheta^{2/3}) E_{\tl, \bm\lambda} .
    \end{equation}
    To prove the upper bound from Theorem \ref{th_gse asymptotics}, it therefore suffices to prove the upper bound
    \begin{equation}
      \label{eq_upper bound cuboid proof of thm} \frac{E_{\tl, \bm\lambda}}{\vartheta L^3} \leq  \frac{E(A^*)}{A^*}
      + C \vartheta^{1/3} + \frac{C}{\vartheta^{1/3} L}
    \end{equation}
    because the additional error term coming from the estimate \eqref{eq_energy error cuboid} is subleading.

    To prove \eqref{eq_upper bound cuboid proof of thm}, we construct a competitor
    set by placing copies of the set $\Omega_0^*$ in boxes $Q_\bl(r)$, $r \in \Z^3$.
    Let $\mathcal C_\bl = \{ r \in \Z^3 \, : \, Q_\bl(r) \subset \bm \lambda Q_L
    \}$ be the set of lattice points $r$ such that the cubes $Q_\bl(r)$ are fully contained in $\bm \lambda Q_L$. Then, setting
    \begin{equation}
      \lambda_{\tl, \bm \lambda}^3 = \frac{\vartheta L^3}{A^* |\mathcal C_\bl|},
    \end{equation}
    we obtain $\bm\lambda Q_L$ as a union of the boxes $Q_{\lambda_{\tl, \bm
    \lambda} l_0}(r)$. That is, we can cover the large box exactly by an integer number
    of small boxes. We therefore define
    \[ \Omega_{\tl, \bm\lambda} = \bigcup_{r \in \mathcal C_\bl} (\lambda_{\tl, \bm
      \lambda} \Omega_0^* + \bl r),
      \qquad\qquad\textnormal{where}\quad
    \bl r := (l_1r_1, l_2r_2, l_3r_3). \]
    Note that this definition fulfills the mass constraint
    \begin{equation} \label{eq_mass constraint with lambda appendix}
      |\Omega_\tl| = |\mathcal C_\bl| A^* \lambda_\tl^3 = \vartheta L^3.
    \end{equation}
    The proof of Theorem \ref{th_gse asymptotics} can now be finalized by following exactly the same steps as in the proof of the upper bound of Theorem \ref{th_gse asymptotics}, using the vanishing of the multipole moments from \eqref{eq_multipole moments vanish proof of thm} in the bound on the far-field interaction. We omit the remaining details.
  \end{proof}

\section{Appendix: Independence of the thermodynamic limit from boundary conditions}
\label{se_thermodynamic_limit}

In this appendix, we prove that the value and the existence of the thermodynamic
limit is independent of the choice of boundary conditions. Although we are confident that in our setting the existence of the thermodynamic limit could be proved with the classical methods from \cite{LeLi, LiNa}, it seems hard to study different boundary conditions or obtain good asymptotics in this framework.
We therefore rely here strongly on a more recent result by Alberti, Choksi and Otto \cite{AlChOt}, who prove convergence of
the energy per volume with an essentially optimal convergence rate of $L^{-1}$ in the
case of Dirichlet and Neumann boundary conditions. We show that their
conclusion also holds for periodic boundary conditions, as well as for the
'whole space' boundary conditions which we use in this paper. In fact, we will see that the Dirichlet and Neumann boundary conditions give the lowest, respectively highest, interaction energy to any set $\Omega$ among a large class of boundary conditions, see Lemmas \ref{lemma dirichlet leq square} and \ref{lemma square leq N} and Remark \ref{remark other boundary conditions} below.

Take some $\vartheta \in (0,1)$ which will be fixed throughout the following and not reflected in the notation.

We consider the functional
\begin{equation}
  \label{definition liquid drop functional}
  \E_{\#,L}[\Omega] := \Per(\Omega) + \frac{1}{2} \int_{Q_L} \int_{Q_L} (1_\Omega(x) - \vartheta) G_{\#,L}(x,y) (1_\Omega(y)- \vartheta) \dd x \dd y
\end{equation}
and the corresponding ground state energy on the cube $Q_L = [-L/2, L/2]^3$
\begin{equation}
\label{definition ground state functional}
E_{\#,L} := \inf\{ \E_\#[\Omega] \,:\, \Omega \subset Q_L, \, |\Omega| = \vartheta L^3 \}.
\end{equation}
Here, the symbol $\#$ is a placeholder for the indices $D$, $N$, $P$ and $\infty$
denoting Dirichlet, Neumann, periodic or 'whole space' boundary conditions, respectively.
We define $G_{\#,L}$ to be the Green's function for the problem on $Q_L$ with the
respective boundary condition.

Note that our energy functional $\E_\tl$ defined in \eqref{eq_definition
functional E theta L} is a rescaled version of the whole space functional
$\E_{\infty, L}$. Indeed, the Green's function of the whole space problem is
$G_{\infty, L}(x,y) = \frac1{4\pi} \frac{1}{|x-y|}$ and thus, we have
for any $\Omega \subset Q_L$ the equality
\begin{align*}
  \E_\tl[\Omega] = (4\pi)^{-2/3} \E_{\infty,\, (4\pi)^{1/3}L}\bqty{(4\pi)^{1/3}
  \Omega}.
\end{align*}

Finally, we denote the thermodynamic limits of the respective boundary condition
by
\[ e_\# = \lim_{L \to \infty} \frac{E_{\#,L}}{L^3}. \]

The main result of this Appendix is that the thermodynamic limit $e_\#$ exists and is independent of the boundary conditions.

\begin{theorem} \label{theorem independence from boundary values}
There are constants $\sigma^* >0$ and $C>0$ depending only on $\vartheta$ such that
\[ \abs{ \frac{E_{\#,L}}{L^3} - \sigma^* } \leq \frac{C}{L}  \]
for every $L \geq C$ and every $\# = D, N, P, \infty$. In particular,
\[ e_D = e_N = e_P = e_\infty = \sigma^*. \]
\end{theorem}

\begin{proof}
It has been proved in \cite{AlChOt} that there exist $\sigma^* >0$ and $C >0$ such that
\begin{equation}
\label{ACO inequalities} \sigma^* - \frac{C}{L} \leq \frac{E_{D,L}}{L^3} \leq  \frac{E_{N,L}}{L^3} \leq \sigma^* + \frac{C}{L}
\end{equation}
for all $L \geq C$. We will complete the proof of Theorem \ref{theorem independence from boundary values} by showing that for every $L>0$ and every $\Omega \subset Q_L$ with $|\Omega| = \vartheta L^3$ the following inequalities between different boundary conditions hold.
\begin{enumerate}
\item[(a)] $\E_{D, L}[\Omega] \leq \E_{\#, L}[\Omega]$ for all $\# \in \{ N, P, \infty\}$.
\item[(b)] $\E_{\#, L}[\Omega] \leq \E_{N,L}[\Omega]$ for all $\# \in \{D, P, \infty\}$.
\end{enumerate}
We point out that the validity of these inequalities requires neither minimization nor passage to the thermodynamic limit. We will prove the inequalities (a) and (b) in Lemmas \ref{lemma dirichlet leq square} and \ref{lemma square leq N} below. Theorem \ref{theorem independence from boundary values} is then an immediate consequence of \eqref{ACO inequalities} and inequalities (a) and (b).
\end{proof}

The proofs of Lemmas \ref{lemma dirichlet leq square} and \ref{lemma square leq N} rely on rewriting the interaction energy in \eqref{definition liquid drop functional} as a gradient integral. This is inspired by the approach in \cite{AlChOt}. Given an admissible set $\Omega \subset Q_L$ with $|\Omega| = \vartheta L^3$, we define
\begin{equation}
\label{definition v of omega}
v_{\#,L}^{(\Omega)}(x) := \int_{Q_L} G_{\#,L}(x,y) (1_\Omega(y) - \vartheta) \dd y, \quad x \in Q_L, \qquad \text{ for  } \# \in \{D, N, P \}
\end{equation}
and
\[ v_{\infty, L}^{(\Omega)}(x) := \frac{1}{4 \pi} \int_{Q_L} \frac{1_\Omega(y) - \vartheta}{|x-y|}  \dd y, \quad x \in \R^3, \qquad \text{for } \# = \infty, \]
the \emph{potential} associated with $\Omega$. The function $v_{\#,L}^{(\Omega)}$ satisfies
\[ - \Delta v_{\#,L}^{(\Omega)}  = 1_\Omega - \vartheta \quad \text{ on } Q_L \]
together with the boundary conditions given by the choice of $\#$. Notice that we even have
\[ -\Delta v_\infty^{(\Omega)} = 1_\Omega - \vartheta 1_{Q_L} \quad \text{ on all of  } \R^3. \]

Using integration by parts together with the respective boundary conditions, we can now write the energy $\E_{\#,L}[\Omega]$ of any set $\Omega \subset Q_L$ with $|\Omega| = \vartheta L^3$ as
\begin{equation}
\label{energy reformulation d, n, p}
\E_{\#,L}[\Omega] = \Per(\Omega) + \frac12 \int_{Q_L} |\nabla v_{\#,L}^{(\Omega)}(x)|^2 \dd x \qquad \text{for  } \# = D, N, P,
\end{equation}
respectively
\begin{equation}
\label{energy reformulation infty}
\E_{\infty, L}[\Omega] = \Per(\Omega) + \frac12 \int_{\R^3} |\nabla v_{\infty, L}^{(\Omega)}(x)|^2 \dd x \qquad \text{for  } \# = \infty.
\end{equation}

We are now ready to prove the inequalities (a) and (b) appearing in the proof of Theorem \ref{theorem independence from boundary values}.

\begin{lemma} \label{lemma dirichlet leq square}
Let $L > 0$ and $\Omega \subset Q_L$ with $|\Omega| = \vartheta L^3$. Then $\E_{D,L}[\Omega] \leq \E_{\#,L}[\Omega]$ for all $\# \in \{N, P, \infty\}$.
\end{lemma}

\begin{proof}
Let $\Omega \subset Q_L$ with $|\Omega| = \vartheta L^3$, let $\# \in \{ N, P, \infty \}$ and abbreviate $v_D = v_{D,L}^{(\Omega)}$ and $v_\# = v_{\#, L}^{(\Omega)}$. Then we have
\begin{align*}
\int_{Q_L} |\nabla v_\#|^2 &= \int_{Q_L} |\nabla v_D|^2 + \int_{Q_L} |\nabla (v_D - v_\#)|^2 - 2 \int_{Q_L} (\nabla v_D - \nabla v_\#) \cdot \nabla v_D \\
&= \int_{Q_L} |\nabla v_D|^2 + \int_{Q_L} |\nabla (v_D - v_\#)|^2 - 2 \int_{\partial Q_L} \frac{\partial (v_D-v_\#)}{\partial \nu} v_D \geq \int_{Q_L} |\nabla v_D|^2,
\end{align*}
where we used $\Delta v_D = \Delta v_\#$ on $Q_L$ and the fact that $v_D$ vanishes on $\partial Q_L$. We conclude that
\[ \E_{D,L}[\Omega] = \Per(\Omega) + \int_{Q_L} |\nabla v_D|^2 \leq \Per(\Omega) + \int_{Q_L} |\nabla v_\#|^2 \leq \E_{\#,L}[\Omega] \]
since the perimeter term does not depend on the boundary condition.
\end{proof}

\begin{lemma} \label{lemma square leq N}
Let $L > 0$ and $\Omega \subset Q_L$ with $|\Omega| = \vartheta L^3$. Then $\E_{\#,L}[\Omega] \leq \E_{N,L}[\Omega]$ for all $\# \in \{D, P, \infty\}$.
\end{lemma}

\begin{proof}
Let $\Omega \subset Q_L$ with $|\Omega| = \vartheta L^3$, let $\# \in \{D,  P, \infty \}$ and abbreviate $v_\# = v_{\#,L}^{(\Omega)}$ and $v_N = v_{N, L}^{(\Omega)}$.
Performing the same calculation as in the proof of Lemma \ref{lemma dirichlet leq square}, we obtain
\begin{align} \nonumber
\int_{Q_L} |\nabla v_N|^2 &= \int_{Q_L} |\nabla v_\#|^2 + \int_{Q_L}
|\nabla (v_\# - v_N)|^2 - 2 \int_{Q_L} (\nabla v_\# - \nabla v_N)
\cdot \nabla v_\# \\ \label{neumann_upper_bound_1}
&= \int_{Q_L} |\nabla v_\#|^2 + \int_{Q_L} |\nabla (v_\# - v_N)|^2 - 2 \int_{\partial Q_L} \frac{\partial v_\#}{\partial \nu} v_\#,
\end{align}
where we used $\Delta v_N = \Delta v_\#$ on $Q_L$ and the fact that $\frac{\partial v_N}{\partial \nu} = 0$ on $\partial Q_L$.
The Dirichlet and the periodic boundary condition imply that
\[ - \int_{\partial Q_L} \frac{\partial v_D}{\partial \nu} v_D = -
\int_{\partial Q_L} \frac{\partial v_P}{\partial \nu} v_P = 0. \]
Therefore, we deduce from equation \eqref{neumann_upper_bound_1} the bound
\[ \int_{Q_L} |\nabla v_N|^2  \geq \int_{Q_L} |\nabla v_\#|^2 \]
for $\# \in \{D, P\}$.
In case $\# = \infty$, integration by parts yields
\begin{align*} \label{neumann_upper_bound_2}
- 2\int_{\partial Q_L} \frac{\partial v_\infty}{\partial \nu} v_\infty
= 2\int_{\R^3 \setminus Q_L} |\nabla v_\infty|^2 \geq 0,
\end{align*}
since $\Delta v_\infty = 0$ on $\R^3 \setminus Q_L$.
Plugging this in equation \eqref{neumann_upper_bound_1}, we get
\begin{align*}
 \int_{Q_L} |\nabla v_N|^2  \geq \int_{\R^3} |\nabla v_\infty|^2.
\end{align*}
The lemma follows since the perimeter term does not depend on the boundary
condition.
\end{proof}

\begin{remark}
\label{remark other boundary conditions}
In fact, the proofs of Lemmas \ref{lemma dirichlet leq square} and \ref{lemma
square leq N} show that we can treat any self-adjoint boundary condition on $\partial Q_L$ with the property that $- \int_{\partial Q_L} \frac{\partial v}{\partial \nu} v \geq 0$ for
every $v$ satisfying this boundary condition. A large such class is for example given by the Robin boundary condition
\[ \frac{\partial v}{\partial \nu}(x)  = - \beta\qty(\frac{x}{L}) v(x) \qquad \text{ on } \partial Q_L, \]
where $\beta \in L^\infty(\partial Q_1)$ is an arbitrary nonnegative bounded function on $\partial Q_1$.
\end{remark}


  \paragraph{Acknowledgements.} The authors acknowledge partial support by the
  U.S. National Science Foundation through grant DMS-1363432 (R.L.F.).

\end{appendices}


\bibliography{liquid_drop_model}
  \bibliographystyle{plain}

(Lukas Emmert) \textsc{Mathematisches Institut, Ludwig-Maximilians Universität München, Theresienstr. 39, 80333 München, Germany}

\textit{Email address:}
\href{mailto:lukas.emmert@lmu.de}{\texttt{lukas.emmert@lmu.de}}

~

(Rupert L. Frank) \textsc{Mathematisches Institut, Ludwig-Maximilians Universität München, Theresienstr. 39, 80333 München, Germany, and Department of Mathematics, California
Institute of Technology, Pasadena, CA 91125, USA}

\textit{Email address:} \href{mailto:r.frank@lmu.de}{\texttt{r.frank@lmu.de}}

~

(Tobias König) \textsc{Mathematisches Institut, Ludwig-Maximilians Universität München, Theresienstr. 39, 80333 München, Germany}

\textit{Email address:} \href{mailto:tkoenig@math.lmu.de}{\texttt{tkoenig@math.lmu.de}}

\end{document}